\newtheorem{thm}{Theorem}
\newtheorem{assumption}{Assumption}
\newtheorem{lemma}{Lemma}
\newenvironment{proof}{\paragraph{Proof:}}{\hfill$\square$}
\DeclareMathOperator*{\argmin}{\arg\!\min}
\DeclareMathOperator*{\argmax}{\arg\!\max}
\newcommand{\qed}{\ensuremath{\square}}
\newcommand{\diag}{{\mathbf{diag}}}
\newcommand{\tr}{{\mathbf{tr}}}
\newcommand{\dom}{{\mathbf{dom}\;}}
\providecommand{\inner}[2]{\left\langle {#1} , {#2} \right\rangle}		   %Inner produc
\providecommand{\norm}[1]{\left\|#1 \right\|}
\providecommand{\dnorm}[1]{\left\|#1 \right\|^{*}}
\providecommand{\norml}[1]{\left\|#1 \right\|_{1}}
\title{Worst Case Competitive Analysis of Online Algorithms for Conic Optimization}
 \author{
 Reza Eghbali \\
  Department of Electrical Engineering\\
  University of Washington\\
  Seattle, WA 98195 \\
  \texttt{eghbali@uw.edu} \\
   \And
    Maryam Fazel \\
  Department of Electrical Engineering\\
  University of Washington\\
  Seattle, WA 98195 \\
  \texttt{mfazel@uw.edu} \\
}
\begin{document}

\maketitle
%give the name rather than algorithm 1 and 2 -- point to examples-  check psi     
%\onehalfspacing

\begin{abstract}%
  Online optimization covers problems such as online resource allocation, online bipartite matching, adwords (a central problem in e-commerce and advertising), and adwords with separable concave returns. 
We analyze the worst case competitive ratio of two primal-dual algorithms for a class of online convex (conic) optimization problems that contains the previous examples as special cases defined on the positive orthant. 
We derive a sufficient condition on the objective function that guarantees a constant worst case competitive ratio (greater than or equal to $\frac{1}{2}$) for monotone objective functions. 
We provide new examples of online problems on the positive orthant
and the positive semidefinite cone that satisfy the sufficient condition. We show how smoothing can improve the competitive ratio of these algorithms, and in particular for separable functions, we show that the optimal smoothing can be derived by solving a convex optimization problem. This result allows us to directly optimize the competitive ratio bound over a class of smoothing functions, and hence 
\emph{design} effective smoothing customized for a given cost function.

\end{abstract}

\section{Introduction}

Given a proper convex cone $K \subset \mathbb{R}^n$, let $\psi: K \mapsto \mathbb{R}$ be an upper semi-continuous concave function. Consider the optimization problem
\begin{align}\label{main}
 \mbox{maximize}&{  \quad \psi\left(\sum_{t=1}^{m} A_t x_t\right)}\\ \notag
\mbox{subject to}& \quad  x_t \in F_t ,  \quad  \forall t \in [m],
\end{align}
where for all $t\in[m] := \{1,2,\ldots,m\}$, $x_t\in\mathbb{R}^k$ are the optimization variables and $F_t$ are compact convex constraint sets. We assume $A_t \in \mathbb{R}^{n\times k}$ maps $F_t$ to $K$; for example, when $K=\mathbb{R}^n_+$ and $F_t \subset \mathbb{R}^k_+$, this assumption is satisfied if $A_t$ has nonnegative entries.
We consider problem \eqref{main} in the online setting, where it 
can be viewed as a sequential game between a player (online algorithm) and an adversary.  At each step $t$, the adversary reveals $A_t,\;F_t$ and the algorithm chooses $\hat{x}_t \in F_t$. The performance of the algorithm is measured by its competitive ratio, i.e., the ratio of objective value at $\hat{x}_1, \ldots, \hat{x}_m$ to the offline optimum. 

Problem \eqref{main} covers (convex relaxations of) various online combinatorial problems including online bipartite matching \cite{karp1990optimal}, the ``adwords'' problem \cite{mehta2007Adwords}, and the secretary problem \cite{kleinberg2005multiple}.  More generally, it covers online linear programming (LP) \cite{buchbinder2009online}, online packing/covering with convex cost \cite{azar2014online,buchbinder2014online,chan2015online}, and generalization of adwords \cite{devanur2012online}. Online LP is an important example on the positive orthant:
\begin{align*}
 \mbox{maximize}&{  \quad \sum_{t=1}^{m} c_t^T x_t}\\ \notag
\mbox{subject to}& \quad \sum_{t=1}^{m} B_t x_t \leq b\\ \notag
& \quad  \mathbf{1}^T x_t \leq 1, \; x_t \in \mathbf{R}_{+}^k,    \quad  \forall t \in [m].
\end{align*}
Here $F_t = \{x \in \mathbf{R}_{+}^k ~|~ \mathbf{1}^T x \leq 1\}$, $A_t^T = [c_t , B_t^T]$ and $\psi(u,v) = u + I_{\{\cdot \leq b\}}(v)$ where $I_{\{\cdot \leq b\}}(v)$ is the concave indicator function of the set $\{v \leq b\}$.

The competitive performance of online algorithms has been studied mainly under the worst-case model (e.g., in \cite{mehta2007Adwords}) or stochastic models (e.g., in \cite{kleinberg2005multiple}). In the worst-case model one is interested in lower bounds on the competitive ratio that hold for any $(A_1, F_1), \ldots,(A_m , F_m)$. 
In stochastic models, adversary chooses a probability distribution from a family of distributions to generate $(A_1, F_1),\ldots,(A_m, F_m)$, and the competitive ratio is calculated using the expected value of the algorithm's objective value. The two most studied stochastic models are random permutation and i.i.d. models. In the random permutation model the adversary is limited to choose distributions that are invariant under random permutation while in the i.i.d. model $(A_1, F_1),\ldots,(A_m, F_m)$ are required to be independent and identically distributed. Note that the worst case model can also be viewed as an stochastic model if one allows distributions with singleton support.

Online bipartite matching, and its generalization the ``adwords'' problem, are two main examples that have been studied under the worst case model. The greedy algorithm achieves a competitive ratio of ${1}/{2}$ while the optimal algorithm achieves a competitive ratio of $1-1/e$ as bid to budget ratio goes to zero \cite{mehta2007Adwords,buchbinder2007online,karp1990optimal,kalyanasundaram2000optimal}. 
A more general version of adwords in which each agent (advertiser) has a concave cost has been studied in \cite{devanur2012online}. On the other hand, secretary problem and its generalizations are stated for random permutation model since the worst case competitive ratio of any online algorithm can be arbitrary small \cite{babaioff2008online,kleinberg2005multiple}. The convex relaxation of the online secretary problem is a simple linear program (LP) with one constraint. Therefore, the competitive ratio of online algorithms for LP has either been analyzed under random permutation or i.i.d models \cite{agrawal2009dynamic,feldman2010online,devanur2011near,jaillet2012near,kesselheimRTV13} or under the worst case model with further restrictions on the problem data \cite{buchbinder2009online}.
  Several authors have also proposed algorithms for adwords and bipartite matching problems under stochastic models which have a better competitive ratio than the competitive ratio under the worst case model \cite{mahdian2011online,karande2011online,feldman2009online2,manshadi2012online,jaillet2013online,devanur2009Adwords}.% For example, the algorithm proposed by Devanur and Hayes \cite{devanur2009Adwords} for adwords under the random permutation model provides a competitive ratio that approaches $1$ as bid to budget ratio goes to zero. This algorithm inspired the later results for online LP algorithms under stochastic models.
%---------------------
%Online bipartite matching and its generalization, the  ``adwords'' problem, are the two main problems that have been studied under the worst case model. The greedy algorithm achieves a competitive ratio of ${1}/{2}$ while the optimal algorithm achieves a competitive ratio of $1-1/e$ (as bid to budget ratio goes to zero) \cite{mehta2007Adwords,buchbinder2007online,karp1990optimal,kalyanasundaram2000optimal}. 
%A more general version of Adwords in which each agent (advertiser) has a concave cost has been studied in \cite{devanur2012online}.

The majority of algorithms proposed for the problems mentioned above rely on a primal-dual framework \cite{buchbinder2007online,buchbinder2009online,azar2014online,devanur2012online,buchbinder2014online}.
The differentiating point among the algorithms is the method of updating the dual variable at each step, since once the dual variable is updated the primal variable can be assigned using a simple assignment rule based on complementary slackness condition. A simple and efficient method of updating the dual variable is through a first order online learning step. For example, the algorithm stated in \cite{devanur2011near} for online linear programming uses mirror descent with entropy regularization (multiplicative weight updates algorithm) once written in the primal dual language. Recently, the work in \cite{devanur2011near} was independently extended to the random permutation model in \cite{gupta2014experts,eghbali2014exponentiated,agrawal2014fast}. In \cite{agrawal2014fast}, the authors provide competitive difference bound for online convex optimization under random permutation model as a function of the regret bound for the online learning algorithm applied to the dual. 
%The online LP algorithm for the worst case model introduced in \cite{buchbinder2009online}, has also been generalized to more general convex problem \cite{azar2014online,buchbinder2014online}.
%,huang2015welfare,blum2011welfare}.

In this paper, we consider two versions of the greedy algorithm for problem \eqref{main} 
The first algorithm, Algorithm \ref{algorithm1}, updates the primal and dual variables sequentially. The second algorithm, Algorithm \ref{algorithm2}, provides a %simple and mathematically rigorous 
direct saddle-point representation of what has been described %by words 
informally in the literature as ``continuous updates'' of primal and dual variables.  This saddle point representation allows us to generalize this type of updates to non-smooth function.
In section \ref{CA}, we bound the competitive ratios of the two algorithms. A sufficient condition on the objective function that guarantees a non-trivial worst case competitive ratio is introduced. We show that the competitive ratio is at least $\frac{1}{2}$ for a monotone non-decreasing objective function. 
Examples that satisfy the sufficient condition (on the positive orthant and the positive semidefinite cone) are given. 
In section \ref{s::smoothing}, we derive optimal algorithms, as variants of greedy algorithm applied to a smoothed version of $\psi$. For example, Nesterov smoothing provides optimal algorithm for online LP. The main contribution of this paper is to show how one can derive the optimal smoothing function (or from the dual point of view the optimal regularization function) for separable $\psi$ on positive orthant by solving a convex optimization problem. This gives a implementable algorithm that achieves the optimal competitive ratio derived in \cite{devanur2012online}. We also show how this convex optimization can be modified for the design of the smoothing function specifically for the sequential algorithm. In contrast, \cite{devanur2012online} only considers continuous updates algorithm.

%------------------------------------------------------------------------------------------------------------------------------------------------%
%                                         																					    %
%                                                                 Notations                                        								    %
%																											    %
%------------------------------------------------------------------------------------------------------------------------------------------------ %
  
{\bf Notation.} We denote the transpose of a matrix $A$ by $A^T$. The inner product on $\mathbb{R}^n$ is denoted by $\inner{\cdot}{\cdot}$. 
Given a function $\psi: \mathbb{R}^n \mapsto \mathbb{R}$, $\psi^*$ denotes the concave conjugate of $\psi$ defined as 
\begin{align*}
\psi^*(y) = \inf_{u} \inner{y}{u} - \psi(u),
\end{align*}
for all $y \in \mathbb{R}^n$. For a concave function $\psi$, $\partial \psi(u)$ denotes the set of supergradients of $\psi$ at $u$, i.e., the set of all $y \in \mathbb{R}^n$ such that
\begin{align*}
\forall u' \in \mathbb{R}^n: \quad \psi(u') \leq \inner{y}{u' - u} + \psi(u).
\end{align*}
The set $\partial \psi$ is related to the concave conjugate function $\psi^*$ as follows. For an upper semi-continuous concave function $\psi$ we have 
\begin{align*}
\partial \psi(u) = \argmin_y \inner{y}{u} - \psi^*(y).
\end{align*}
Under this condition, $(\psi^*)^* = \psi$. 

A differentiable function $\psi$ has a Lipschitz continuous gradient with respect to $\norm{\cdot}$ with continuity parameter $\frac{1}{\mu} > 0$ if for all $u,u' \in \mathbb{R}^n$, 
%\begin{align*}
\[\dnorm{\nabla \psi(u') - \nabla \psi(u)} \leq \frac{1}{\mu} \norm{u-u'},\]
%\end{align*}
where $\dnorm{\cdot}$ is the dual norm to $\norm{\cdot}$. For an upper semi-continuous concave function $\psi$, this is equivalent to $\psi^*$ being $\mu$-strongly concave with respect to $\dnorm{\cdot}$ (see, for example, \cite[chapter 12]{rockafellar1998variational}). 
%That is, for any $\theta \in [0,1]$ and $y,y' \in \mathbb{R}^n$:
%\begin{align*}
%\[\psi^*(\theta y + (1-\theta) y' ) \geq \theta \psi^*(y) + (1-\theta) \psi^*(y') + \frac{\mu}{2} \theta (1-\theta){\dnorm{y-y'}}^2\]
%\end{align*}

The dual cone $K^*$ of a cone $K \subset \mathbb{R}^n$ is defined as $K^* = \{y \;\,|\;\, \inner{y}{u} \geq 0 \;\, \forall u \in K \}$. 
Two examples of self-dual cones are the positive orthant $\mathbb{R}_{+}^n$ and the cone of $n\times n$ positive semidefinite matrices $S_{+}^n$. A proper cone (pointed convex cone with nonempty interior) $K$ induces a partial ordering on $\mathbb{R}^n$ which is denoted by $\leq_{K}$ and is defined as \begin{nips}$x \leq_{K} y   \Leftrightarrow y-x \in K.$\end{nips}
\begin{align*}
x \leq_{K} y   \Leftrightarrow y-x \in K.
\end{align*}
For two sets $F,G \subset  \mathbb{R}^n$, we write $F \leq_K G$ when $u \leq_K v$ for all $u \in F, v \in G$.

%------------------------------------------------------------------------------------------------------------------------------------------------%
%                                         																					    %
%                                                                  Problem Formulation and Algorithms								    %
%																											    %
%------------------------------------------------------------------------------------------------------------------------------------------------ %

\subsection{Two greedy algorithms}\label{two-algorithms}

The (Fenchel) dual problem for problem \eqref{main} is given by
\begin{equation}\label{fenchel-dual}
 \mbox{minimize} {\quad \sum_{t=1}^{m} \sigma_{t}(A_t^T y) - \psi^{*}(y)},
\end{equation}
where the optimization variable is $y\in\mathbb{R}^n$, and $\sigma_t$ denotes the \emph{support function} for the set $F_t$ defined as $\sigma_{t}(z) = \sup_{x \in F_{t}} \inner{x}{z}$. We denote the optimal dual objective with $D^\star$.

A pair $(x^*,y^*) \in (F_1 \times \ldots F_m) \times K^*$ is an optimal primal-dual pair if and only if
\begin{align} \label{primal-optimality}
 x^*_t &\in \argmax_{x \in F_t} \inner{x}{A_t^{T} y^*}\quad \forall t \in [m],\\ \label{dual-optimality}
 y^* &\in \partial \psi(\sum_{t=1}^{m} A_t x_t^*).
\end{align}
Based on these optimality conditions, we consider two algorithms. Algorithm \ref{algorithm1} updates the primal and dual variables \emph{sequentially}, by maintaining a dual variable $\hat{y}_t$ and using it to assign $\hat{x}_t \in \argmax_{x \in F_t} \inner{x}{A_t^{T} \hat{y}_t}$. The algorithm then updates the dual variable based on the second optimality condition \eqref{dual-optimality} \footnote{we assume that $\partial \psi$ nonempty on the boundary of $K$, otherwise, we can modify the algorithm to start from a point in the interior of the cone very close to zero.}. By the assignment rule, we have $A_t \hat{x}_t \in \partial \sigma_{t}(\hat{y}_t)$, and the dual variable update can be viewed as
\begin{align*}\hat{y}_{t+1} \in \argmin_{y} \inner{\sum_{s=1}^{t} A_s \hat{x}_s}{y} - \psi^*(y).\end{align*}
Therefore, the dual update is the same as the update in dual averaging \cite{nesterov2009primal} or 
Follow The Regularized Leader (FTRL) \cite{shalev2007primal,abernethy2008competing}\cite[section 2.3]{shalev2011online} algorithm with regularization $-\psi^*(y)$. 
\begin{algorithm}
\caption{Sequential Update}
   \label{algorithm1}
{\fontsize{10}{15}\selectfont
    \begin{algorithmic}
    \State {Initialize $\hat{y}_1 \in \partial \psi(0)$}
      \For{$t \gets 1 \textrm{ to } m$}
      \State{Receive $A_t, F_t$}
       \State{$ \hat{x}_t \in \argmax_{x \in F_{t}} \inner{x}{A_t^{T} \hat{y}_t}$}
       \State{$\hat{y}_{t+1} \in \partial \psi(\sum_{s=1}^{t} A_s \hat{x}_s)$}
      \EndFor
       \end{algorithmic}
       }
\end{algorithm}

Algorithm \ref{algorithm2} updates the primal and dual variables \emph{simultaneously}, ensuring that
 \[
 \tilde{x}_t \in \argmax_{x \in F_t} \inner{x}{A_t^{T} \tilde{y}_t}, \qquad \tilde{y}_{t} \in \partial \psi( \sum_{s=1}^{t} A_s \tilde{x}_s).
\]
%\end{align*}
This algorithm is inherently more complicated than algorithm \ref{algorithm1}, since finding $\tilde{x}_t$ involves solving a saddle-point problem. This can be solved by a first order method like the mirror descent algorithm for saddle point problems. In contrast, the primal and dual updates in algorithm \ref{algorithm1} solve two separate maximization and minimization problems \footnote{Also if the original problem is a convex relaxation of an integer program, meaning that each $F_t = {\rm conv} \mathcal{F}_t$ where $\mathcal{F}_t \subset \mathbb{Z}^l$, then $\hat{x}_t$ can always be chosen to be integral while integrality may not hold for the solution of the second algorithm.} 
%In the next section, we provide a sufficient condition on $\psi$ that yields a non-trivial worst case competitive ratio for both algorithms.
\begin{algorithm}
\caption{Simultaneous Update}
   \label{algorithm2}
{\fontsize{10}{15}\selectfont
    \begin{algorithmic}
      \For{$t \gets 1 \textrm{ to } m$}
      \State{Receive $A_t, F_t$}
   %    \State{Find $\tilde{x}_t, \tilde{y}_t$ such that}
    %   \State{$  \tilde{x}_t \in \argmax_{x \in F_t} \inner{x}{A_t^{T} \tilde{y}_t}$}
     %  \State{$ \tilde{y}_{t} \in \partial \psi( \sum_{s=1}^{t} A_s \tilde{x}_s)$}
\State{$(\tilde{y}_t, \tilde{x}_t ) \in \arg\min_{y }\max_{x \in F_t} \;\inner{y}{A_t x + \sum_{s=1}^{t-1} A_s \tilde{x}_s} - \psi^*(y) $ }   
      \EndFor
       \end{algorithmic}
       }
\end{algorithm}

For a reader more accustomed to optimization algorithms, we would like to point to alternative views on these algorithms. If $0 \in F_t$ for all $t$, then an alternative view on Algorithm \ref{algorithm2} is coordinate minimization. Initially, all the coordinates are set to zero, then at step $t$ 
\begin{align}\label{coordinate}
\tilde{x}_t \in \argmax_{x \in F_t} \psi\left(\sum_{s=1}^{t} A_t \tilde{x}_s + A_t x\right).
\end{align}  

If in addition $\psi$ is differentiable, Algorithm \ref{algorithm1}, at each time step $t$, applies one iteration of Frank-Wolfe algorithm \cite{frank1956algorithm} for solving \eqref{coordinate}.
%------------------------------------------------------------------------------------------------------------------------------------------------%
%                                         																					    %
%                                                                  Competitive Analysis                       								    %
%																											    %
%------------------------------------------------------------------------------------------------------------------------------------------------ %
%\section{Competitive Analysis}\label{CA}
\section{Competitive ratio bounds and examples for $\psi$}\label{CA} 
In this section, we derive bounds on the competitive ratios of Algorithms \ref{algorithm1} and \ref{algorithm2} by bounding their respective duality gaps. Let $\tilde{y}_{m+1}$ to be the minimum element in $\partial \psi(\sum_{t=1}^{m} A_t \tilde{x}_t)$ with respect to ordering $\leq_{K^*}$ (such an element exists in the superdifferential by Assumption \eqref{antitone}, which appears later in this section). We also choose $\hat{y}_{m+1}$ to be the minimum element in $\partial \psi(\sum_{t=1}^{m} A_t \hat{x}_t)$ with respect to $\leq_{K^*}$. Note that $\hat{y}_{m+1}$ is used for analysis and does not play a role in the assignments of Algorithm \ref{algorithm1}. 
Let $P_{\rm seq}$ and $P_{\rm sim}$ denote the primal objective values for the primal solution produced by the algorithms \ref{algorithm1} and \ref{algorithm2}, and $D_{\rm seq}$ and $D_{\rm sim}$ denote the corresponding dual objective values,
\begin{align*}
P_{\rm seq} &=\psi\left(\sum_{t=1}^{m} A_t \hat{x}_t\right), \; &&P_{\rm sim} = \psi\left(\sum_{t=1}^{m} A_t \tilde{x}_t\right), \\
D_{\rm seq} &= \sum_{t=1}^{m} \sigma_t(A_t^T \hat{y}_t) - \psi^{*}(\hat{y}_{m+1}), \; &&D_{\rm sim} = \sum_{t=1}^{m} \sigma_t(A_t^T \tilde{y}_t) - \psi^{*}(\tilde{y}_{m+1}).
\end{align*}

The next lemma provides a lower bound on the duality gaps of both algorithms.
\begin{nips}
\begin{lemma}\label{duality-gap} The duality gaps for the two algorithms can be lower bounded as 
\[
P_{\rm sim} - D_{\rm sim} \geq \psi^*(\tilde{y}_{m+1}) + \psi(0), \quad
P_{\rm seq} - D_{\rm seq} \geq \psi^*(\hat{y}_{m+1}) +\psi(0) + \sum_{t=1}^{m} \inner{ A_t \hat{x}_t}{ \hat{y}_{t+1} - \hat{y}_t}
\]
Furthermore, if $\psi$ has a Lipschitz continuous gradient with parameter ${1}/{\mu}$ with respect to  $\norm{\cdot}$, 
%we further have $\sum_{t=1}^{m} \inner{ A_t \hat{x}_t}{ \hat{y}_{t+1} - \hat{y}_t} \geq - \frac{1}{2 \mu}\sum_{t=1}^{m} \norm{ A_t \hat{x}_t}^2$. 
\begin{align}\label{regret}
P_{\rm seq} - D_{\rm seq} \geq \psi^*(\hat{y}_{m+1}) +\psi(0) - \frac{1}{2 \mu}\sum_{t=1}^{m} \norm{ A_t \hat{x}_t}^2.
\end{align}
\end{lemma}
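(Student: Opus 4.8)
The plan is to reduce all three inequalities to two elementary facts: the primal assignment rule makes the support function tight, and the dual variables lie in the superdifferential of $\psi$ along the running partial sums. Write $U_t = \sum_{s=1}^{t} A_s \hat{x}_s$ and $\tilde{U}_t = \sum_{s=1}^{t} A_s \tilde{x}_s$, with $U_0 = \tilde{U}_0 = 0$. Since $\hat{x}_t \in \argmax_{x\in F_t}\inner{x}{A_t^T\hat{y}_t}$, the supremum defining the support function is attained, so $\sigma_t(A_t^T\hat{y}_t) = \inner{A_t\hat{x}_t}{\hat{y}_t}$, and likewise $\sigma_t(A_t^T\tilde{y}_t) = \inner{A_t\tilde{x}_t}{\tilde{y}_t}$. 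Substituting these into $D_{\rm sim}$ and $D_{\rm seq}$ replaces each $\sigma_t$ term by the corresponding inner product, so both duality gaps take the form $\psi(U_m) - \sum_{t=1}^{m}\inner{A_t\hat{x}_t}{\hat{y}_t} + \psi^*(\hat{y}_{m+1})$ and its tilde analogue. The argument then reduces to lower-bounding $\psi(U_m) - \sum_t \inner{A_t\hat{x}_t}{\hat{y}_t}$ by $\psi(0)$ plus a correction term.

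For Algorithm \ref{algorithm2} this is immediate. Since $\tilde{y}_t \in \partial\psi(\tilde{U}_t)$, the supergradient inequality evaluated at $\tilde{U}_{t-1}$ gives $\psi(\tilde{U}_{t-1}) \le \psi(\tilde{U}_t) - \inner{\tilde{y}_t}{A_t\tilde{x}_t}$, i.e. $\psi(\tilde{U}_t) - \psi(\tilde{U}_{t-1}) \ge \inner{\tilde{y}_t}{A_t\tilde{x}_t}$. Telescoping over $t = 1,\dots,m$ and using $\tilde{U}_0 = 0$ yields $\psi(\tilde{U}_m) - \sum_t \inner{A_t\tilde{x}_t}{\tilde{y}_t} \ge \psi(0)$, which after adding back $\psi^*(\tilde{y}_{m+1})$ is precisely $P_{\rm sim} - D_{\rm sim} \ge \psi(0) + \psi^*(\tilde{y}_{m+1})$.

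Algorithm \ref{algorithm1} differs only in that the dual variable used to assign $\hat{x}_t$ is $\hat{y}_t \in \partial\psi(U_{t-1})$, whereas the superdifferential relation needed for telescoping holds for $\hat{y}_{t+1} \in \partial\psi(U_t)$, one index ahead. Running the same telescoping argument with $\hat{y}_{t+1}$ gives $\psi(U_m) - \psi(0) \ge \sum_t \inner{\hat{y}_{t+1}}{A_t\hat{x}_t}$; subtracting the $\sum_t \inner{A_t\hat{x}_t}{\hat{y}_t}$ that comes from the dual objective leaves the residual $\sum_t \inner{A_t\hat{x}_t}{\hat{y}_{t+1} - \hat{y}_t}$, which is exactly the second claimed bound. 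This index mismatch is the structural price the sequential algorithm pays relative to the simultaneous one.

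The last inequality is where I expect the only real subtlety. Under a $1/\mu$-Lipschitz gradient, $\psi$ is differentiable with $\hat{y}_t = \nabla\psi(U_{t-1})$, and one is tempted to bound the residual term by term using Cauchy--Schwarz together with $\dnorm{\hat{y}_{t+1} - \hat{y}_t} \le \frac{1}{\mu}\norm{A_t\hat{x}_t}$; but that route yields $-\frac{1}{\mu}\norm{A_t\hat{x}_t}^2$ per term and loses the factor of two. The fix is to bypass the residual and telescope directly with the descent-lemma lower bound: for a concave function with $1/\mu$-Lipschitz gradient, $\psi(U_t) \ge \psi(U_{t-1}) + \inner{\nabla\psi(U_{t-1})}{A_t\hat{x}_t} - \frac{1}{2\mu}\norm{A_t\hat{x}_t}^2$, where the gradient is taken at the \emph{left} endpoint $U_{t-1}$, so that $\nabla\psi(U_{t-1}) = \hat{y}_t$ matches the term appearing in the dual objective. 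Summing over $t$ and adding $\psi^*(\hat{y}_{m+1})$ gives $P_{\rm seq} - D_{\rm seq} \ge \psi(0) + \psi^*(\hat{y}_{m+1}) - \frac{1}{2\mu}\sum_t \norm{A_t\hat{x}_t}^2$ with the correct constant. The single point to verify carefully is this alignment between the base point of the quadratic estimate and the $\hat{y}_t$ in the dual, which is what makes the residual cancel and the factor $\tfrac{1}{2}$ survive.
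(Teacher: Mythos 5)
Your proof is correct and follows essentially the same route as the paper's: replace each $\sigma_t$ by the attained inner product, bound $\sum_t \inner{A_t x_t}{y}$ by a telescoping sum of increments of $\psi$ via the supergradient inequality (with the index shift $\hat{y}_{t+1}\in\partial\psi(U_t)$ producing exactly the residual $\sum_t\inner{A_t\hat{x}_t}{\hat{y}_{t+1}-\hat{y}_t}$), and in the Lipschitz case use the quadratic lower bound $\psi(U_t)\ge\psi(U_{t-1})+\inner{\nabla\psi(U_{t-1})}{A_t\hat{x}_t}-\frac{1}{2\mu}\norm{A_t\hat{x}_t}^2$ anchored at the left endpoint, which is precisely the inequality the paper invokes. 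Your remark that term-by-term Cauchy--Schwarz on the residual would lose the factor of two is a fair observation but does not change the argument.
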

\end{nips}
\begin{full}
\begin{lemma}\label{duality-gap} The duality gap for Algorithm \ref{algorithm2} is lower bounded as
\begin{align*}P_{\rm sim} - D_{\rm sim} \geq \psi^*(\tilde{y}_{m+1}) + \psi(0),\end{align*}
and the duality gap for Algorithm \ref{algorithm1} is lower bounded as
\begin{align*}P_{\rm seq} - D_{\rm seq} \geq \psi^*(\hat{y}_{m+1}) +\psi(0) + \sum_{t=1}^{m} \inner{ A_t \hat{x}_t}{ \hat{y}_{t+1} - \hat{y}_t}. \end{align*} 
Furthermore, if $\psi$ has a Lipschitz continuous gradient with parameter ${1}/{\mu}$ with respect to $\norm{\cdot}$, then
\begin{align}\label{regret}
P_{\rm seq} - D_{\rm seq} \geq \psi^*(\hat{y}_{m+1}) +\psi(0) - \sum_{t=1}^{m} \frac{1}{2 \mu}\norm{ A_t \hat{x}_t}^2.
\end{align}
\end{lemma}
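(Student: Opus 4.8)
The plan is to prove all three bounds by the same scheme: telescope a per-step increment of $\psi$ along the partial sums of the allocations, identify the increment with a support-function term $\sigma_t$, and then match the telescoped sum against the definition of the dual objective. For the simultaneous algorithm, write $u_t = \sum_{s=1}^t A_s \tilde{x}_s$ (so $u_0 = 0$). Since $\tilde{y}_t \in \partial\psi(u_t)$, the supergradient inequality evaluated at $u_{t-1}$ gives $\psi(u_{t-1}) \leq \psi(u_t) + \inner{\tilde{y}_t}{u_{t-1}-u_t}$, i.e. $\psi(u_t) - \psi(u_{t-1}) \geq \inner{\tilde{y}_t}{A_t\tilde{x}_t}$. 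Because $\tilde{x}_t$ maximizes $\inner{x}{A_t^T\tilde{y}_t}$ over $F_t$, the right-hand side equals $\sigma_t(A_t^T\tilde{y}_t)$. Summing and telescoping yields $\psi(u_m) - \psi(0) \geq \sum_{t} \sigma_t(A_t^T\tilde{y}_t)$; substituting $P_{\rm sim} = \psi(u_m)$ and the definition of $D_{\rm sim}$ gives the first bound.

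For the sequential algorithm the same idea applies, but with a one-step lag: set $v_t = \sum_{s=1}^t A_s\hat{x}_s$, and note that $\hat{x}_t$ is chosen using $\hat{y}_t \in \partial\psi(v_{t-1})$, whereas the supergradient certifying the current iterate $v_t$ is $\hat{y}_{t+1}$. Applying the supergradient inequality with $\hat{y}_{t+1} \in \partial\psi(v_t)$ gives $\psi(v_t) - \psi(v_{t-1}) \geq \inner{\hat{y}_{t+1}}{A_t\hat{x}_t}$. I would then split $\inner{\hat{y}_{t+1}}{A_t\hat{x}_t} = \inner{\hat{y}_t}{A_t\hat{x}_t} + \inner{\hat{y}_{t+1}-\hat{y}_t}{A_t\hat{x}_t}$, use the assignment rule to identify $\inner{\hat{y}_t}{A_t\hat{x}_t} = \sigma_t(A_t^T\hat{y}_t)$, and telescope; matching against $D_{\rm seq}$ produces the second bound, whose extra term is exactly $\sum_{t}\inner{A_t\hat{x}_t}{\hat{y}_{t+1}-\hat{y}_t}$.

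The Lipschitz refinement is where the extra care lies. Rather than bounding the correction term by Cauchy--Schwarz together with the Lipschitz estimate $\dnorm{\hat{y}_{t+1}-\hat{y}_t} \leq \frac{1}{\mu}\norm{A_t\hat{x}_t}$ (which would only give a factor $1/\mu$), I would instead invoke the quadratic descent inequality afforded by smoothness. Since $\psi$ is concave with $1/\mu$-Lipschitz gradient, $\psi(v_t) \geq \psi(v_{t-1}) + \inner{\nabla\psi(v_{t-1})}{A_t\hat{x}_t} - \frac{1}{2\mu}\norm{A_t\hat{x}_t}^2$, and here $\nabla\psi(v_{t-1}) = \hat{y}_t$ is precisely the dual variable used for the assignment, so $\inner{\hat{y}_t}{A_t\hat{x}_t} = \sigma_t(A_t^T\hat{y}_t)$. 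Telescoping this inequality directly delivers the constant $1/(2\mu)$ of \eqref{regret}. The main obstacle throughout is the index mismatch in the sequential case between the dual variable driving the assignment, $\hat{y}_t$, and the one certifying the current iterate, $\hat{y}_{t+1}$; anchoring the supergradient (respectively descent) inequality at the correct base point $v_{t-1}$ is exactly what keeps the $\sigma_t(A_t^T\hat{y}_t)$ terms aligned with the dual objective and, in the smooth case, yields the sharp factor of one half rather than one.
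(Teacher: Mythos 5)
Your proposal is correct and follows essentially the same route as the paper: the supergradient (telescoping) inequality anchored at the partial sums, the split of $\inner{A_t\hat{x}_t}{\hat{y}_t}$ into $\inner{A_t\hat{x}_t}{\hat{y}_{t+1}}$ plus the correction term for the sequential case, and the quadratic descent inequality at $v_{t-1}$ (rather than Cauchy--Schwarz) to obtain the sharp $\frac{1}{2\mu}$ constant. No gaps.
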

\end{full}
Note that right hand side of \eqref{regret} is exactly the regret bound of the FTRL algorithm (with a negative sign) \cite{shalev2007online}. The proof 
%for Lemma \ref{duality-gap} which for completeness contains the proof of \eqref{regret}, 
is given in Appendix \ref{proofs}.  
To simplify the notation in the rest of the paper, we assume $\psi(0)= 0$.% by replacing $\psi(u)$ with $\psi(u) - \psi(0)$.

Now we state a sufficient condition on $\psi$ that leads to non-trivial competitive ratios, and we assume this condition holds in the rest of the paper. 
%For the competitive analysis, we make the following assumption on $\psi$. s assu
One can interpret this assumption as having ``diminishing returns'' with respect to the ordering induced by a cone. Examples of functions that satisfy this assumption will appear later in this section.

\begin{assumption}\label{antitone}
Whenever $u \geq_K v$, there exists $y \in \partial \psi(u)$ that satisfies \begin{nips}$y \leq_{K^*} z $\end{nips}
\begin{full}%----full
\begin{align*}
y \leq_{K^*} z, 
\end{align*}
\end{full}%----End of Paper -----
for all $z \in \partial \psi(v)$.
\end{assumption}
When $\psi$ is differentiable, Assumption \ref{antitone} simplifies to 
\begin{align*}
u \geq_K v \; \Rightarrow\; \nabla\psi(u) \leq_{K^*} \nabla\psi(v).
\end{align*}
That is, the gradient, as a map from $\mathbb{R}^n$ (equipped with $\leq_{K}$) to $\mathbb{R}^n$ (equipped with $\leq_{K^*}$),
is order-reversing (also known as \emph{antitone}). 
If $\psi$ satisfies Assumption \ref{antitone}, then so does $\psi \circ A$, when $A$ is a linear map such that $ \inner{y}{A v} \geq 0$,
for all $v \in K$ and $y \in K^*$. When $\psi$ is twice differentiable, Assumption \ref{antitone} is equivalent to
%\begin{align}\label{negative-hessian}
$ \inner{w}{\nabla^2 \psi(u) v} \leq 0$,
%\end{align}
for all $u, v, w \in K$. For example, this is equivalent to Hessian being element-wise non-positive when $K = \mathbf{R}_{+}^n$. 
\subsection{Competitive ratio for non-decreasing $\psi$}
To quantify the competitive ratio of the algorithms, we define $\alpha_\psi: K \mapsto \mathbf{R}$ as
\begin{align}\label{alpha}
\alpha _\psi(u) = \sup{\{c\; | \;\psi^*(y) \geq c \psi(u),\; \forall y \in \partial \psi(u)\}},
\end{align}
%If $\psi$ is not identical to zero, 
which for $u\neq0$ can also be written as
%\begin{align*}
\[
\alpha_\psi(u) = \inf_{y \in \partial \psi(u)}{\frac{\psi^*(y)}{\psi(u)}}.
\]%then $\alpha_\psi(u)$ for all $u \in K-\{0\}$ can be written as
%\begin{align*}
%\alpha_\psi(u) = \inf_{y \in \partial \psi(u)}{\frac{\psi^*(y)}{\psi(u)}}.
%\end{align*}
%uncomment for psi(0)
%$$\alpha _\psi = \sup{\{c\; | \;\psi^*(y) + \psi(0)\geq c (\psi(u) - \psi(0))\}, y \in \partial \psi(u), u\in K\}};$$ 
Since $\psi^*(y) + \psi(u) = \inner{y}{u}$ for all $y \in \partial \psi(u)$, $\alpha_\psi$ is equivalent to\begin{full}\footnote{If $\psi(0) \neq 0$, then the definition of $\alpha_\psi$ should be modified to $\alpha _\psi(u) = \sup\{c\; | \;\inner{y}{u}  \geq (c+1) (\psi(u)-\psi(0)), y \in \partial \psi(u)\}$.}\end{full}
\begin{align}\label{alpha2}
\alpha _\psi(u) &= \sup\{c\; | \;\inner{y}{u}  \geq (c+1) \psi(u),\; \forall y \in \partial \psi(u)\}\\ \notag
&= \sup\{c\; | \; \psi'(u;u)  \geq (c+1) \psi(u)\},
\end{align}
%uncomment for psi(0)
%$$\alpha _\psi = \sup\{c\; | \;\inner{y}{u}  \geq (c+1) (\psi(u)-\psi(0)), y \in \partial \psi(u), u\in K\}.$$
%\paragraph{Properties of $\alpha$.} We mention some of the properties of $\alpha_\psi$. 
where $\psi'(u,v)$ is the directional derivative of $\psi$ at $u$ in the direction of $v$. Note that for any $u \in K$, we have $-1 \leq \alpha _\psi(u) \leq 0$ since for any $y \in \partial{\psi(u)}$, by concavity of $\psi$ and the fact that $y \in K^*$, we have
$0 \leq \inner{y}{u} \leq \psi(u) - \psi(0).$  If $\psi$ is a linear function then $\alpha_{\psi} = 0$, while if $0 \in \partial \psi(u)$ for some $u \in K$, then $\alpha_{\psi}(u) = -1$. 
\begin{full}%--- full ---
 Figure \ref{alpha-figure} shows a differentiable function $\psi: \mathbb{R}
_{+} \mapsto \mathbb{R}$. The value of $\alpha(u_0)$ is the ratio of $ \psi^*(\psi'(u_{0}))$ to $\psi(u_0)$ as shown.

\begin{figure}
\begin{center}
\begin{tikzpicture}[thick,scale=1, every node/.style={scale=1}]
  \draw[line width=1.5 pt,->] (-0.4,0) -- (3.7,0);
  \draw[line width=1.5 pt,->] (0,-.4) -- (0,2.3) node[above] {};
  \draw[line width=.75 pt,dashed] (1.5,0) -- (1.5,1.42);
  \draw[line width=.75 pt,dashed] (0,1.42) -- (1.5,1.42);
  \draw[line width=2 pt,scale=1,domain=0:3.4,smooth,variable=\x,blue] plot ({\x},{2*exp(-.2)+\x/10- 2*exp(-\x-.2) });
  \draw[scale=1,domain=0:3.4,smooth,variable=\x] plot ({\x},{(0.1+2*exp(-1.7))*\x+.75});
    \draw (0,.64) node[left]{$-\psi^*(\psi'(u_0))$};
    \draw (0,1.42) node[left]{$\psi(u_0)$};
    \draw (1.5,0) node[below]{$u_0$};
    \draw (4,0) node[below]{$u$};
   % \draw (0,2.3) node[left]{$\psi(u)$};
\end{tikzpicture}
\end{center}
\caption{ $\psi'$ is the derivative of $\psi$ and $ -\psi^*(\psi'(u_{0}))$ is the $y$-intercept of the tangent to the function graph at $u_{0}$. ${\alpha}_{\psi}(u_0)$ is the ratio of $\psi^*(\psi'(u_{0}))$ to $\psi(u_0)$.}\label{alpha-figure}
\end{figure}
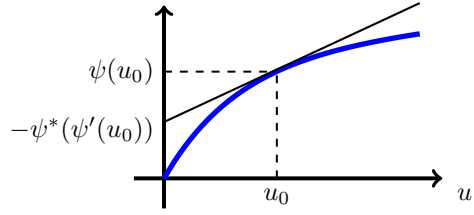
\end{full}%--- End full
%
%%%%Given two upper semi-continuous concave functions $\psi$ and $\phi$, the epi-sum (sup-convolution) of the two functions is defined as:
%%%%
%%%%\begin{align*}
%%%%\psi \Box \phi(u) = \sup_{u = v+z} \psi(u) + \phi(v).
%%%%\end{align*}
%%%%
%%%%The conjugate of $\psi \Box \phi$ is:
%%%%
%%%%\begin{align*}
%%%%(\psi \Box \phi)^*(y) = \psi^*(y) + \phi^*(y).
%%%%\end{align*}
%%%%
%%%%The epi-multiplication of $\psi$ by $\lambda>0$ is defined as:
%%%%
%%%%\begin{align*}
%%%%\lambda \star \psi (u) = \lambda \psi(\frac{u}{\lambda}).
%%%%\end{align*}
%%%%
%%%%\begin{prop}
%%%%For two upper semi-continuous concave functions $\psi$ and $\phi$ on $K$, one has
%%%%\begin{align*}
%%%%&\alpha_{\psi \Box \phi} \geq \min{(\alpha_{\psi}, \alpha_{\phi})},\\
%%%%&\alpha_{\lambda \star \psi} = \alpha_{\psi}.
%%%%\end{align*}
%%%%\end{prop}
%%%%The proof follows from the definition of epi-multiplication and epi-sum and is given in the appendix.

The next theorem provides lower bounds on the competitive ratio of the two algorithms. 

\begin{full}
\begin{thm}\label{thm1}
If Assumption \ref{antitone} holds, then for the simultaneous update algorithm we have
\footnote{If $\psi(0) \neq 0$, then the conclusion becomes $P_{\rm sim} - \psi(0)\geq \frac{1}{1-\alpha _\psi} (D^\star - \psi(0))$.}:
\begin{align*}
%P_{\rm sim} -\psi(0)\geq \frac{1}{1-\alpha _\psi} (D^\star-\psi(0))
P_{\rm sim}\geq \frac{1}{1-\bar{\alpha}_\psi} D^\star.
\end{align*}
where $D^\ast$ is the dual optimal objective and
\begin{align}\label{alpha-bar}
 \bar{\alpha}_{\psi} = \inf\{\alpha_{\psi}(u)\,|\, u \in  K\}.
\end{align}
For the sequential update algorithm,
\begin{align*}
%P_{\rm seq} - \psi(0)\geq \frac{1}{1-\alpha _\psi} (D^\star-\psi(0)) + \frac{1}{1-\alpha _\psi} \sum_{t=1}^{m} \inner{ A_t \hat{x}_t}{ \hat{y}_{t+1} - \hat{y}_t} 
P_{\rm seq} \geq \frac{1}{1-\bar{\alpha} _\psi} (D^\star+ \sum_{t=1}^{m} \inner{ A_t \hat{x}_t}{ \hat{y}_{t+1} - \hat{y}_t}). 
\end{align*}
Further, if $\psi$ is differentiable with gradient Lipschitz continuity parameter $\frac{1}{\mu}$ with respect to $\norm{\cdot}$,
\begin{align}\label{Riemannian-sum}
%P_{\rm seq} - \psi(0) \geq \frac{1}{1-\alpha _\psi} (D^\star- \psi(0)) + \frac{1}{1-\alpha _\psi} \sum_{t=1}^{m} \frac{1}{\mu}\norm{ A_t \hat{x}_t}^2
P_{\rm seq}  \geq \frac{1}{1-\bar{\alpha} _\psi} (D^\star - \sum_{t=1}^{m} \frac{1}{2\mu}\norm{ A_t \hat{x}_t}^2).
\end{align}  
\end{thm}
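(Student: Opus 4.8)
The plan is to derive all three bounds from the duality-gap estimates of Lemma \ref{duality-gap} by two independent moves: (i) converting the $\psi^*(\cdot)$ term into a multiple of the primal value through the definition of $\alpha_\psi$, and (ii) showing that the algorithmic dual value $D_{\rm sim}$ (resp.\ $D_{\rm seq}$) is itself at least the offline optimum $D^\star$. Throughout I use $\psi(0)=0$ together with the standing hypothesis of this subsection that $\psi$ is non-decreasing, so that every supergradient lies in $K^*$ and $\psi(u)\geq0$ for $u\in K$.

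First I would carry out the $\alpha_\psi$ conversion. Write $u=\sum_{t=1}^m A_t\tilde{x}_t$, so $P_{\rm sim}=\psi(u)$ and $\tilde{y}_{m+1}\in\partial\psi(u)$. By the definition \eqref{alpha} of $\alpha_\psi$ and since $\psi(u)\geq0$, every $y\in\partial\psi(u)$ satisfies $\psi^*(y)\geq\alpha_\psi(u)\,\psi(u)\geq\bar{\alpha}_\psi\,\psi(u)$, using $\bar{\alpha}_\psi\leq\alpha_\psi(u)$; in particular $\psi^*(\tilde{y}_{m+1})\geq\bar{\alpha}_\psi P_{\rm sim}$. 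Substituting into the Lemma \ref{duality-gap} bound $P_{\rm sim}-D_{\rm sim}\geq\psi^*(\tilde{y}_{m+1})$ and collecting the $P_{\rm sim}$ terms yields $(1-\bar{\alpha}_\psi)P_{\rm sim}\geq D_{\rm sim}$. The identical computation applied to the two sequential bounds of Lemma \ref{duality-gap} gives $(1-\bar{\alpha}_\psi)P_{\rm seq}\geq D_{\rm seq}+\sum_{t}\inner{A_t\hat{x}_t}{\hat{y}_{t+1}-\hat{y}_t}$ and, in the Lipschitz case, $(1-\bar{\alpha}_\psi)P_{\rm seq}\geq D_{\rm seq}-\sum_{t}\frac{1}{2\mu}\norm{A_t\hat{x}_t}^2$.

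The hard part will be the second ingredient, $D_{\rm sim}\geq D^\star$ and $D_{\rm seq}\geq D^\star$, because these quantities are not evaluations of the dual objective \eqref{fenchel-dual} at a single point: each support function uses a different dual iterate $\tilde{y}_t$ (resp.\ $\hat{y}_t$), while the $\psi^*$ term uses only $\tilde{y}_{m+1}$ (resp.\ $\hat{y}_{m+1}$). The key is a monotonicity argument built on Assumption \ref{antitone}. Writing $v_t=\sum_{s\leq t}A_s\tilde{x}_s$, each increment $A_s\tilde{x}_s\in K$, so $v_m\geq_K v_t$; Assumption \ref{antitone} then furnishes a supergradient of $\psi$ at $v_m$ that is $\leq_{K^*}$ every element of $\partial\psi(v_t)$, and since $\tilde{y}_{m+1}$ is the $\leq_{K^*}$-minimal element of $\partial\psi(v_m)$ we obtain $\tilde{y}_{m+1}\leq_{K^*}\tilde{y}_t$ for every $t$ (the same chain, applied to $\sum_{s\leq t-1}A_s\hat{x}_s\leq_K\sum_{s}A_s\hat{x}_s$, gives $\hat{y}_{m+1}\leq_{K^*}\hat{y}_t$ in the sequential case). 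Because $A_t$ maps $F_t$ into $K$, the map $y\mapsto\sigma_t(A_t^Ty)$ is non-decreasing in $\leq_{K^*}$: if $y\geq_{K^*}y'$ then $\inner{A_tx}{y-y'}\geq0$ for all $x\in F_t$, whence $\sigma_t(A_t^Ty)\geq\sigma_t(A_t^Ty')$. Applying this with $y=\tilde{y}_t$, $y'=\tilde{y}_{m+1}$ and summing gives
\[
D_{\rm sim}=\sum_{t=1}^m\sigma_t(A_t^T\tilde{y}_t)-\psi^*(\tilde{y}_{m+1})\geq\sum_{t=1}^m\sigma_t(A_t^T\tilde{y}_{m+1})-\psi^*(\tilde{y}_{m+1})\geq D^\star,
\]
the last step being weak duality, since the middle expression is the dual objective \eqref{fenchel-dual} evaluated at the single feasible point $\tilde{y}_{m+1}$. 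The identical steps yield $D_{\rm seq}\geq D^\star$.

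Finally I would combine the two ingredients. Substituting $D_{\rm sim}\geq D^\star$ into $(1-\bar{\alpha}_\psi)P_{\rm sim}\geq D_{\rm sim}$ and dividing by $1-\bar{\alpha}_\psi\geq1>0$ (valid since each $\alpha_\psi(u)\in[-1,0]$ forces $\bar{\alpha}_\psi\leq0$) gives $P_{\rm sim}\geq\frac{1}{1-\bar{\alpha}_\psi}D^\star$. Likewise $(1-\bar{\alpha}_\psi)P_{\rm seq}\geq D^\star+\sum_{t}\inner{A_t\hat{x}_t}{\hat{y}_{t+1}-\hat{y}_t}$ and its Lipschitz counterpart produce the two stated sequential bounds after the same division. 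The only genuinely delicate point is establishing the $\leq_{K^*}$-minimality/monotonicity of the dual iterates through Assumption \ref{antitone}; the remainder is substitution and the sign bookkeeping guaranteed by $1-\bar{\alpha}_\psi>0$.
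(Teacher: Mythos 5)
Your proposal is correct and follows essentially the same route as the paper's proof: the central step in both is using Assumption \ref{antitone} together with the $\leq_{K^*}$-minimality of $\tilde{y}_{m+1}$ (resp.\ $\hat{y}_{m+1}$) to get $\tilde{y}_t \geq_{K^*} \tilde{y}_{m+1}$, hence $\sigma_t(A_t^T\tilde{y}_t) \geq \sigma_t(A_t^T\tilde{y}_{m+1})$ and $D_{\rm sim} \geq D^\star$ by weak duality, then combining with Lemma \ref{duality-gap} and the definition of $\bar{\alpha}_\psi$. You merely spell out the final substitution (and the positivity of $1-\bar{\alpha}_\psi$) that the paper leaves implicit.
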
 
%
%By Lemma 1, we have 
%$$ P_{\rm sim} - D_{\rm sim} \geq \psi^*(\tilde{y}_{m+1}). $$
%
%Let $u = \sum_{t=1}^{m} A_t \tilde{x}_t$. By the definitions $P_{\rm sim} = \psi (u )$ and $\tilde{y}_{m+1} \in \partial \psi(u)$. On the other hand by the definition of $\alpha_\psi(u)$ we have $\psi^*(\tilde{y}_{m+1}) \geq \alpha_{\psi}(u) \psi(u) = \alpha_\psi(u) P_{\rm sim}$. Combining this with previous inequality we get:
%$$ P_{\rm sim} - D_{\rm sim} \geq \alpha_\psi(u) P_{\rm sim}. $$
%
%This yields
%$$ P_{\rm sim} \geq  \frac{1}{1 - \alpha_\psi(u)}  D_{\rm sim}. $$
%
%We defined $\bar{\alpha}_{\psi}$ to be the smallest possible $\alpha_{\psi}(u)$ as $u$ varies in $\sum_{t=1}^{m} A_t F_t$. Therefore,
%$$ P_{\rm sim} \geq  \frac{1}{1 - \bar{\alpha}_\psi}  D_{\rm sim}. $$

\begin{proof}
We first show that Assumption \ref{antitone} implies that
\begin{align}\label{tilde-sigma}
\sum_{t=1}^{m} \sigma_t(A_t^T \tilde{y}_t) \geq  \sum_{t=1}^{m} \sigma_t(A_t^T \tilde{y}_{m+1}),\\\notag
\sum_{t=1}^{m} \sigma_t(A_t \hat{y}_t) \geq \sum_{t=1}^{m} \sigma_{t}(A_t \hat{y}_{m+1}),
\end{align}

To do so, we prove that for all $t$, $\sigma_t(A_t^T \tilde{y}_t) \geq \sigma_{t}(A_t^T \tilde{y}_{m+1})$. The proof for $\sigma_t(A_t^T \hat{y}_t) \geq \sigma_{t}(A_t^T \hat{y}_{m+1})$ follows the same steps. 

For any $t$, we have $\sum_{s=1}^{t} A_s \tilde{x}_s \leq_{K} \sum_{s=1}^{m} A_s \tilde{x}_s$ since $A_s F_s \subset K$ for all $s$. Since $\tilde{y}_{t} \in \partial \psi(\sum_{s=1}^{t} A_s \tilde{x}_s)$ and $\tilde{y}_{m+1}$ was chosen to be the minimum element in $\partial\psi(\sum_{s=1}^{m} A_s \tilde{x}_s)$ with respect to $\leq_{K^*}$, by the Assumption \ref{antitone}, we have 
\begin{align*}
\tilde{y}_{t} \geq_{K^*} \tilde{y}_{m+1}.
\end{align*}

Since $A_{t} x \in K$ for all $x \in F_{t}$, we get $\inner{A_{t} x}{y_{t}} \geq \inner{A_{t} x}{y_{m}}. $
Therefore, $  \sigma_{t}(A_{t}^T \tilde{y}_{t}) \geq \sigma_{t}(A_{t}^T \tilde{y}_{m+1})  $.
Now by \eqref{tilde-sigma} 
\begin{align*}
D_{\rm sim} &=  \sum_{t=1}^{m} \sigma_t(A_t^T \tilde{y}_t) - \psi^{*}(\tilde{y}_{m+1}) \geq  \sum_{t=1}^{m} \sigma_t(A_t^T \tilde{y}_{m+1}) - \psi^{*}(\tilde{y}_{m+1}) \geq D^\star.
\end{align*}
Similarly, $D_{\rm seq} \geq D^\star$. Lemma \ref{duality-gap} and definition of $\bar{\alpha}_{\psi}$ give the desired result.
\end{proof}
\end{full}

\begin{full}%--- full
\subsection{Competitive ratio for non-monotone $\psi$}\label{non-monotone-section}
If $\psi$ is not non-decreasing with respect to $K$, the algorithms are not guaranteed to increase the objective at each step; therefore, $P_{\rm seq}$ and $P_{\rm sim}$ are not guaranteed to be non-negative. However, if we add the assumption that $0 \in F_t$ for all $t$,  then Algorithm \ref{algorithm2} will not decrease the objective at any step,
%\subsection{Competitive ratio for non-monotone $\psi$}\label{non-monotone-section}
%If $\psi$ is not non-decreasing with respect to $K$, then the algorithms are not guaranteed to increase the objective at each step; therefore, $P_{\rm seq}$ and $P_{\rm sim}$ are not guaranteed to be non-negative. However, if we add the assumption that $0 \in F_t$ for all $t$,  then Algorithm \ref{algorithm2} is guaranteed to increase the objective at each step\begin{full},\end{full}\begin{nips}. The definition of $\bar{\alpha}_\psi$ can be generalized such that the conclusion of Theorem \ref{thm1} for Algorithm \ref{algorithm2} hold for this case. Throughout this paper we provide examples of non-monotone $\psi$. The generalization of the definition of $\bar{\alpha}_\psi$ to this case is given in the supplementary document.\end{nips}
\begin{align}\label{non-decreasing-prop}
 \psi(\sum_{s=1}^{t} A_s \tilde{x}_s) - \psi(\sum_{s=1}^{t-1} A_s \tilde{x}_s) \geq \inner{\tilde{x}_t}{A_t^T \tilde{y}_t} = \max_{x \in F_t}\inner{x}{A_t^T \tilde{y}_t} \geq 0,
\end{align}%-----End full ----
which follows from concavity of $\psi$.

In other words, the algorithm simply assigns $0$ to $x_t$ if any other feasible point in $F_t$ reduces the objective value. Define 
$$\tilde{u} = \sum_{t=1}^{m} A_t \tilde{x}_t,$$ 
and note that under the assumption $0 \in F_t$, we have $\psi(\tilde{u}) \geq 0$. Further, we have:
\begin{align}\label{non-monotone-alpha}
P_{\rm sim} \geq \frac{1}{1 - \alpha_{\psi}(\tilde{u})}D^\star.
\end{align}

We provide examples of non-monotone $\psi$ and their competitive ratio analysis in this section and section \ref{nesterov-section}. Given appropriate conditions on $F_t$ and $A_t$, in order to find a lower bound on the competitive ratio independent of $\tilde{u}$, we only need to lower bound $\alpha_{\psi}$ over a subset of $K$. Note that when $\psi$ is not non-decreasing, then there exists a supergradient that is not in $K^*$. Therefore, $\alpha_{\psi}(\tilde{u})$ for general $\psi$ can be less than $-1$. In this case, the lower bound for the competitive ratio of Algorithm \eqref{algorithm2} is less than $\frac{1}{2}$.

\end{full}%---- End full ---

%
%Using the inverse relationship between $\partial \psi$ and $\partial \psi^*$, we can rewrite the condition of the previous theorem in terms of $\partial \psi^*$: the hypothesis of Theorem \ref{thm1} is satisfied if 
%\begin{align*}
%y \leq_{K^*} z
%\end{align*}
%whenever there exist $u \in \partial \psi^*(y)$ and $v \in \partial \psi^*(z)$ such that $u \geq_K v$ and $u \neq v$.
%

We now consider examples of $\psi$ that satisfy Assumption \ref{antitone} and derive lower bound on ${\alpha}_\psi$ for those examples.

\subsection{Examples on the positive orthant.}\label{orthant-examples}
Let $K = \mathbb{R}_{+}^{n}$ and note that $K^* = K$. To simplify the notation we use $\leq$ instead of $\leq_{\mathbb{R}^n_+}$. When $\psi$ has continuous partial second derivatives with respect to all the variables, Assumption \ref{antitone} is equivalent to the Hessian being element-wise non-positive over $\mathbb{R}^n_+$. Assumption \ref{antitone} is closely related to submodularity. In fact, on the positive orthant this assumption is sufficient for submodularity of $\psi$ on the lattice defined by $\leq$. However, this assumption is not necessary for submodularity. When $\psi$ has continuous partial second derivatives with respect to all the variables, the necessary and sufficient condition only requires the off-diagonal elements of the Hessian to be non-positive \cite{lorentz1953inequality}.

If $\psi$ is separable, i.e.,\begin{nips} $\psi(u) = \sum_{i=1}^{n} \psi_{i}(u_i),$\end{nips}
\begin{full}\begin{align}\label{sum}%---- Full ----
\psi(u) = \sum_{i=1}^{n} \psi_{i}(u_i),
%\phi(u) = \sum_{i=1}^{n} \phi_{i}(u_i)
\end{align}
\end{full}% ----- end full ----
Assumption \ref{antitone} is satisfied since by concavity for each $\psi_i$ we have $\partial \psi_i(u_i) \leq \partial \psi_i(v_i)$ when $u_i \leq v_i$.  If $\psi$ satisfies the Assumption \ref{antitone}, then so does $\psi(A u )$, where $A$ is an element-wise non-negative matrix. 
\paragraph{Adwords} In the basic adwords problem, for all $t$, $F_t = \{x \in \mathbb{R}^l_+\;|\; \mathbf{1}^T x \leq 1 \}$, $A_t$ is a diagonal matrix with non-negative entries, and
\vspace{-1ex}
%\begin{align*}\psi_{i}(u) = \left\{\begin{array}{lc}u & 0\leq u \leq 1,\\ 1 & u > 1,\end{array}\right.\end{align*}
\begin{align}\label{adwords-objective}
\psi(u) = \sum_{i=1}^{n} u_i - \sum_{i=1}^{n}(u_i -1)_{+},
\end{align}
where $(\cdot)_+ = \max\{\cdot,0\}$. 
%-- Figure this thing out and get it out the way
%Note that since the only assumption we have on $A_t$ is that $A_t F_t \subset K$, one can simply generalize this problem to the case where $A_t$ is a full matrix with non-negative entries.
%--- Did you figure it out????????????????????????????????????????????? damn lp. 
In this problem, $\psi^*(y) =  \mathbf{1}^T (y - \mathbf{1})$. Since $0 \in \partial \psi(\mathbf{1})$ we have $\alpha_{\psi} = -1$ by \eqref{alpha2}; therefore, the competitive ratio of algorithm \ref{algorithm2} is $\frac{1}{2}$. Let $r = \max_{t,i,j}{{A}_{t,i j}}$, then we have $\sum_{t=1}^{m} \inner{ A_t \hat{x}_t}{ \hat{y}_{t+1} - \hat{y}_t}  \leq n r $. Therefore, the competitive ratio of algorithm \ref{algorithm1} goes to $\frac{1}{2}$ as $r$ (bid to budget ratio) goes to zero. In adwords with concave returns studied in \cite{devanur2012online}, $A_t$ is diagonal for all $t$ and $\psi$ is separable \footnote{Note that in this case one can remove the assumption that $\partial \psi_i \subset \mathbb{R}_+$ since if $\tilde{y}_{t,i} = 0$ for some $t$ and $i$, then $\tilde{x}_{s,i} = 0$ for all $s \geq t$. }.

%\begin{full}%--- full ---
\paragraph{Online linear program and non-separable budget} Recall that online LP is given by
\begin{align*}
 \mbox{maximize}&{  \quad \sum_{t=1}^{m} c_t^T x_t}+ I_{\{\cdot \leq \mathbf{1}\}}\left(\sum_{t=1}^{m} B_t x_t\right)\\ \notag
& \quad  x_t \in F_t,  \quad  \forall t \in [m].
\end{align*}
where $F_t = \{x \in \mathbf{R}_+^k ~|~ \mathbf{1}^T x \leq 1\}$ is the simplex. If a lower bound on the optimal dual variable $\min_i y^*_{i} \geq - l$ is given, then the LP can be written in the exact penalty form \cite{bertsekas1975necessary}:
\begin{align}\label{General-lp}
\mbox{maximize}_{x_t \in F_t} \sum_{t=1}^{m} c_t^T x_t + G\left(\sum_{t=1}^{m} B_t x_t\right).
\end{align}
where $G$ is an $l$-Lipschitz continuous function with respect to $l_1$ norm given by
\begin{align*}
 G(u) = -l \sum_{i=1}^{n}(u_i-1)_{+}.
\end{align*}
$-l$ is a lower bound on dual variable if %A bound on the dual variable is given by 
\begin{align}\label{l-bound}
l > \max \left\{\frac{c_{t,j}}{B_{t,i j}}\;|\; B_{t,i j} > 0, \; j \in [k],\; i \in [n]\right\}.
%
%l \geq \max \left\{\frac{c_t^T x}{(B_t x)_i}\;|\; x \in F_t,  (B_t x)_i > 0, i \in [m]\right\}.
\end{align} 
  To prove this, by way of contradiction, we assume that ${y}_{i}^* \leq - l$ for some $i \in [n]$. Now by the definition of $l$, we have $c_{t,j} + (B_{t}^T y^*)_j < 0$ for all $j$ such that $B_{t, i j} > 0$. On the other hand by the optimality condition \eqref{primal-optimality}, $c_t + B_{t}^T y^* \in N_{F_t}(x_t^*)$ where $N_{F_t}(x_t^*)$ is the normal cone of the simplex at $x_t^*$. Therefore, we should have $x^*_{t,j} = 0$ for all $j$ such that $B_{t, i j} > 0$. This results in $(B_{t} {x}_{t}^{*})_{i} = 0$ which yield $(\sum_{t=1}^{m} B_{t} x_t^*)_i = 0$. This means that the corresponding variable $y_i^* = 0$ which is a contradiction. With this choice of $l$, Algorithm \ref{algorithm2} always maintains a feasible solution for the original problem when applied to \eqref{General-lp}. This follows from the fact that if $\tilde{y}_{t,i} = -l$ for some $t$ and $i$, then $(B_t \tilde{x}_t)_i = 0$. Note that in the adwords problem $ B_t = \diag(c_t)$ and $l = 1$. 
  
  %if ${y}_{i}^* < - l$, then we have $(B_{t} {x}_{t}^{*})_{i} = 0$ for all $t$. To see this note that if $(B_{t} {x}_{t}^{*})_{i})_{i} > 0$, then by the definition of $l$, $c_t^T x_t^* + {y^*}^T B_t x_t^* < 0$. However, since $0 \in F_t$, by \eqref{primal-optimality}, we have $c_t^T x_t^* + {y^*}^T B_t x_t^* \geq 0 $. Therefore , we must have $(B_{t} {x}_{t}^{*})_{i} = 0$ for all $t$. This yields $(\sum_{t=1}^{m} B_{t} x_t^*)_i = 0$, but this means that the corresponding variable $y_i^* = 0$. 

For any $p \geq 1$ let $\mathcal{B}_p$ be the $l_p$-norm ball. In order to provide examples of non-separable $G$, we rewrite the function $\sum_{i=1}^{n}(u_i-1)_{+}$ using the distance from $\mathcal{B}_\infty$ . For any set $C \subset \mathbb{R}_{+}^n$, let $d_1(u,C) = \inf_{\bar{u} \in C} \norml{u - \bar{u}}$ and note that $d_1(u,C)$ is $1$-Lipschitz continuous with respect to $l_1$ norm. We have:
\begin{align}\label{infty-distance}
\sum_{i=1}^{n}(1-u_i)_{+} = d_1(u,\mathcal{B}_{\infty}).
\end{align}

 Consider a problem with constraint $\norm{\sum_{t=1}^{m} B_t x_t }_{p} \leq 1$. Given the bound on dual variable in \eqref{l-bound}, this problem can be equivalently written in the form of \eqref{General-lp} with the exact penalty \cite[Theorem 5.5]{burke1991exact}
\begin{align*}
G(u) = - l d_1(u,\mathcal{B}_{p}).
\end{align*}
When $p = \infty$, we get back \eqref{infty-distance}.  

For $p \geq 1$, although not separable, the function $G(u) = -d_1(u,\mathcal{B}_{p})$ satisfies Assumption \ref{antitone}.  For $\mathcal{B}_1$, that follows from the fact that $d_1(u, \mathcal{B}_1) = (\mathbf{1}^T u- 1)_+$; therefore, $\partial d_1(u,\mathcal{B}_1) = \mathbf{1} \partial f(\mathbf{1}^T u)$, where $f(x) = (x- 1)_{+} $.  The proof for $p>1$ is given in Appendix \ref{lin-lp}. 

When $\psi$ is given by \ref{General-lp} with $G(u) = - l d_1(u,\mathcal{B}_{p})$, we have ${\alpha}_{\psi}(\tilde{u}) \geq -\frac{l}{\theta}$, where  $\theta =\min_{t} \min_{x \in F_t} \frac{c_t^T x}{\mathbf{1}^T B_{t} x}$. The derivation of this bound is also given in Appendix \ref{lin-lp}.

\subsection{Examples on the positive semidefinite cone.}\label{sec:logdet}
 Let $K = S_{+}^n$ and note that $K^* = K$. %$\psi: S_{+}^n \mapsto \mathbb{R}$ satisfies the condition 
 %If $\partial \psi$ is matrix monotone:
 %\begin{align*}
 %\partial \psi(X) \leq \partial \psi(Y), \text{    when    } X \geq Y.
 %\end{align*} 
  An interesting example that satisfies Assumption \ref{antitone} is $\psi(U) = \tr{U^p}$ with $p \in (0,1)$, where $\nabla \psi(U) =p  U^{p-1}$ and $\bar{\alpha}_{\psi} = p - 1$. This objective function is used in $p \text{th}$ mean optimal experiment design.
 Another example is $\psi(U) = \log\det(U+A_0)$, where  $\nabla \psi(U) = (U+A_0)^{-1}$ and $\bar{\alpha}_\psi = -1$ since
%
 %Interesting examples that satisfy Assumption \ref{antitone} are $\psi(U) = \log\det(U+A_0)$, and $\psi(U) = \tr{U^p}$ with $p \in (0,1)$. In the first case $\nabla \psi(U) = (U+A_0)^{-1}$, and in the second case $\nabla \psi(U) =p  U^{p-1}$. For $\psi(U) = \log\det(U)$, $\alpha_\psi = -1$ since:
 \begin{align*}
 \frac{\inner{\nabla \psi(U)}{U}}{\psi(U) - \psi(0)} = \frac{n - \tr{(A_0 + U)^{-1} A_0}}{\log\det((A_0 + U) A_0^{-1})} \rightarrow 0 \mbox{   as    }  \tr{U} \rightarrow \infty .
 \end{align*}
 
% For $\psi(U) = \tr{U^{P}}$, $\alpha_{\psi} = p - 1$. 
 
%An important problem based on the $\log\det$ function is MAXDET problem which constitutes maximizing the determinant subject to linear matrix inequalities \cite{vandenberghe1998determinant}. 
Maximizing $\log\det$ arises in several offline applications including D-optimal experiment design \cite{pukelsheim1993optimal}, maximizing the Kirchhoff complexity of a graph \cite{zavlanos2008distributed}, Optimal sensor selection \cite{Joshi2009,Shamaiah2010}. \begin{full}An example of applications for maximizing the logdet of a projected Laplacian, as a function of graph edge weights, appears in connectivity control of mobile networks \cite{zavlanos2008distributed}. \end{full}
\begin{full}%---full--
We derive the competitive ratio of the greedy algorithm with smoothing for online experiment design and Kirchhoff complexity maximization of a graph in section \ref{s::smoothing}.
\end{full}%---End full ---
\begin{nips}
We have analyzed the competitive ratio of the greedy algorithm with smoothing for online experiment design and Kirchhoff complexity maximization of a graph, which due to space limitation is given in the supplementary materials.
\end{nips}

\section{Smoothing of $\psi$ for improved competitive ratio}\label{s::smoothing}

The technique of ``smoothing'' an objective function, or equivalently adding a strongly convex regularization term to its conjugate function, has been used in several areas. In convex optimization, a general version of this is due to Nesterov \cite{nesterov2005smooth}, and has led to faster convergence rates of first order methods for non-smooth problems. 
%In online learning, it has been used to improve regret bounds in XXX \textcolor{red}{leu's add more here, or point to section 4 if you discuss it there, and add a citation}
%
In this section, we study how replacing $\psi$ with a appropriately smoothed function ${\psi_S}$ helps improve the performance of the two algorithms discussed in section \ref{two-algorithms}, 
%In this section we study how replacing the function $\psi$ with a approrpaitely ``smoothed'' function ${\psi_S}$ can help improve the performance of Algorithm \ref{algorithm2}. 
% smoothing the function $\psi$ can improve the performance of greedy algorithm. 
%We first consider Nesterov's smoothing \cite{nesterov2005smooth}, which is a method used for improving the convergence rate of first order algorithms for convex optimization 
and show that it provides optimal competitive ratio for two of the problems mentioned in section \ref{CA}, adwords and online LP. 
We then show how to maximize the competitive ratio of Algorithm \ref{algorithm2} for a separable $\psi$ and compute the optimal smoothing by solving a convex optimization problem. This allows us to \emph{design} the most effective smoothing
customized for a given $\psi$: we maximize the bound on the competitive ratio over the set of smooth functions.% with desired monotonicity properties 
(see subsection \ref{optimal-smoothing} for details). 

%We then tackle the problem of computing the optimal smoothing given a separable function $\psi$, by solving a convex optimization problem. 
%on the positive orthant. (it will be clear we're talking about the positive orthant, OK to omit it here.
%The performance of greedy algorithms can be improved by \emph{smoothing} the function $\psi$. 

Let ${\psi_S}$ denote an upper semi-continuous concave function (a smoothed version of $\psi$), and suppose ${\psi_S}$ satisfies Assumption \ref{antitone}. The algorithms we consider in this section are the same as Algorithms \ref{algorithm1} and \ref{algorithm2}, but with $\psi$ replacing ${\psi_S}$. Note that the competitive ratio is computed with respect to the original problem, that is the offline primal and dual optimal values are still the same $P^\star$ and $D^\star$ as before. If we replace $\psi$ with ${\psi_S}$ in algorithms \ref{algorithm1} and \ref{algorithm2}, the dual updates are modified to
\begin{align*} 
&\hat{y}_{t+1} \in \argmin_{y} \inner{\sum_{s=1}^{t} A_s \hat{x}_s}{y} - {\psi_S}^*(y) ,\\
&(\tilde{x}_t, \tilde{y}_t) \in \arg\min_{y}\max_{x \in F_t} \;\inner{y}{A_t x + \sum_{s=1}^{t-1} A_s x_s} - {\psi_S}^*(y).
\end{align*}

From Lemma \ref{duality-gap}, we have that 
\begin{align}
&D_{\rm sim} \leq {\psi_S}\left(\sum_{t=1}^{m} A_t \tilde{x}_t\right) - \psi^*(\tilde{y}_{m+1}) \\ \label{smooth_duality_gap}
&D_{\rm seq} \leq {\psi_S}\left(\sum_{t=1}^{m} A_t \hat{x}_t\right)-  \psi^*(\hat{y}_{m+1}) - \sum_{t=1}^{m} \inner{ A_t \hat{x}_t}{ \hat{y}_{t+1} - \hat{y}_t}.
\end{align}
%
%Similar to our assumption on $\psi(0)$, 
%by replacing ${\psi_S}(\cdot)$ with ${\psi_S}(\cdot) - {\psi_S}(0)$, 
Similar to our assumption on $\psi(0)$, to simplify the notation, by replacing ${\psi_S}(\cdot)$ with ${\psi_S}(\cdot) - {\psi_S}(0)$, we assume ${\psi_S}(0) = 0$. Define
\begin{align*}
&\alpha_{\psi,{\psi_S}}(u) =\sup \{c \;|  \psi^*(y) \geq  {\psi_S}(u) + (c-1) \psi(u)  ,  \forall y \in \partial {\psi_S}(u)\},
\end{align*}
and
\begin{align*}
\bar{\alpha} _{\psi,{\psi_S}} = \inf\{\alpha _{\psi,{\psi_S}}(u) \; | \; u \in K\}.
\end{align*}

Now the conclusion of Theorem \ref{thm1} holds with $\bar{\alpha} _{\psi}$ replaced by $\bar{\alpha} _{\psi,{\psi_S}}$. Similarly, when ${\psi_S}$ is non-monotone, inequality \eqref{non-monotone-alpha} holds with ${\alpha} _{\psi}$ replaced by ${\alpha} _{\psi,{\psi_S}}$.% and the conclusion Algorithm \ref{algorithm1} holds with $\bar{\alpha} _{\psi}$ replaced by $\bar{\alpha} _{\psi,{\psi_S}}$ when ${\psi_S}$ is non-decreasing.
%
%To simplify the notation, assume ${\psi_S}(0) = 0$ as before. Define
%\begin{align*}
%&\alpha_{\psi,{\psi_S}}(u) =\sup \{c \;|  \psi^*(y) \geq  {\psi_S}(u) + (c-1) \psi(u)  ,  y \in \partial {\psi_S}(u)\},
%\end{align*}
%
%\begin{nips} and $\bar{\alpha} _{\psi,{\psi_S}} = \sup\{\alpha _{\psi,{\psi_S}}(u) \; | \; u \in \sum_{t=1}^{m} A_t F_t\}$. Then the conclusion of Theorem \ref{thm1} for Algorithms \ref{algorithm1} and \ref{algorithm2} applied to the smoothed function holds with $\bar{\alpha} _{\psi}$ replaced by $\bar{\alpha} _{\psi,{\psi_S}}$. \end{nips}
%

\subsection{Nesterov Smoothing}
\label{nesterov-section}

We first consider Nesterov smoothing \cite{nesterov2005smooth}, and apply it to examples on non-negative orthant. 
Given a proper upper semi-continuous concave function $\phi: \mathbb{R}^n \mapsto \mathbb{R} \cup \{-\infty\}$, let
\begin{align*}
{\psi_S} = (\psi^*+\phi^*)^*.
%=: \psi \Box \phi
\end{align*}
Note that ${\psi_S}$ is the hypo-sum (sup-convolution) of $\psi$ and $\phi$.
\begin{align*}
{\psi_S} = \psi \Box \phi (u) = \sup_{v} \psi(v) + \phi(u-v).
\end{align*}
This is called hypo-sum of $\psi$ and $\phi$ since the hypo-graph of ${\psi_S}$ is the Minkowski sum of hypo-graphs of $\psi$ and $\phi$. 

If $\psi$ and $\phi$ are separable, then $\psi \Box \phi$ satisfies Assumption \ref{antitone} for $K = \mathbb{R}_{+}^n$. Here we provide example of Nesterov smoothing for functions on non-negative orthant.

%\paragraph{Adwords:}
%The optimal competitive ratio for the Adwords problem is $1-e^{-1}$. This is achieved by using the smoothing function $\phi_i(u_i) = (u_i - \exp(u_i))/(1-e^{-1})$, which gives 
%$$(\psi \Box \phi)_{i}(u_i)  - (\psi \Box \phi)_{i}(0)= \left\{\begin{array}{cc}\frac{e u_i - \exp{(u_i)} +1}{e-1}&  u_i \in [0,1] \\ \frac{1}{e-1} & u_i > 1,\end{array}\right.$$
%and 
%$$\bar{\alpha}_{\psi,\psi\Box\phi} = \frac{-1}{e-1}.$$
%Note that $\psi \Box \phi$ \eqref{local-norm} satisfies the condition of Theorem with $l_{\infty}$-norm and $\mu = 2$ 
\paragraph{Adwords and Online LP:}
Consider the problem \eqref{General-lp} with $G(u) = -l \sum_{i=1}^{n}(u_i-1)_{+}$. For this problem we smooth $G$ with:
\begin{align*}
\phi^*(y)=&\frac{1}{\gamma}(\sum_{i=1}^{m} ((y_i -\frac{\theta}{(e-1)}) \log(1-\frac{(e-1)}{\theta} y_i))-(1+\gamma)\mathbf{1}^T y),
\end{align*}
\noindent where $\gamma = \log(1+ \frac{l (e-1)}{\theta})$, 
\begin{align*}\theta =\min_{t} \min_{x \in F_t} \frac{c_t^T x}{\mathbf{1}^T B_{t} x}\end{align*} and $l$ is defined as in \eqref{l-bound}. In this case, we have;
\begin{align*}
% \geq \frac{1}{\gamma (1+\theta/(e-1))}
%\alpha_{\psi,\psi\Box\phi}  \geq\frac{1-\gamma}{1+\frac{\theta}{l (e-1)}} - \frac{1}{e-1}
\alpha_{\psi,\psi\Box\phi}(\tilde{u}) \geq 1 - (1+\frac{1}{e-1})\gamma .
\end{align*}
(see the Appendix \ref{alpha-derivation} for the derivation). This gives the competitive ratio of $\frac{1}{\gamma}(1-1/e) $. In the case of adwords where $\theta = l$, this yield the optimal competitive ratio of $1-e^{-1}$ and the smoothed function coincides with the one derived in the previous paragraph. For a general LP, this approach provides the optimal competitive ratio, which is known to be $O(\gamma^{-1})$ \cite{buchbinder2009online}. 

\paragraph{Online experiment design and online graph formation:}  
As pointed in section \ref{sec:logdet}, maximizing the determinant arises in graph formation, sensor selection and optimal experiment design. In the offline form, these problems can be cast into maximizing $\log\det$ subject to linear constraints as follows:
\begin{align}\label{Kirchhoff}
\mbox{maximize}  \quad &\log\det(A_0+\sum_{t=1}^{m} a_t a_t^T x_t)\\ \notag
\mbox{subject to} \quad &\sum_{t=1}^{m}  x_t \leq b,\\ \notag
& x_t \in [0,1], \quad \forall t,
\end{align} 
where $A_0 \succ 0$ and  $a_t \in \mathbb{R}^n$. We describe the online version of this problem for graph formation and experiment design case. Consider the following online problem on a graph: given an underlying connected graph with the Laplacian matrix $L_0$, at round $t$, the online algorithm is presented with an edge and should decide whether to pick or drop the edge by choosing $x_t \in \{0,1\}$. There is a bound on the number of the edges that can be picked, $\sum_{t=1}^{n} x_t \leq b$. The goal is to maximize the Kirchhoff complexity of the graph. The convex relaxation of the optimization problem is given by \ref{Kirchhoff}
where $A_0 = L_0 + \mathbf{1}\mathbf{1}^T$ and $a_t$ is the incidence vector of the edge presented at round $t$. 
The online experiment design or sensor selection is the same as \eqref{Kirchhoff}. In that problem the vectors $a_t$ are experiment or measurement vectors.
%In online experiment design or sensor selection, the vector $a_t$ is an experiment or measurement vector that provides linear noisy measurements from a vector $w$, i.e., $\xi_t=\inner{a_t}{w} + n_t$, where $n_t \sim \mathcal{N}(0,\sigma^2)$ is Gaussian noise. The algorithm makes online decisions on whether to pick or drop a measurement vector. The goal is to minimize the determinant of the error variance of the \textit{maximum a priori estimator (MAP)} of ${w}$. The convex relaxation of this problem can also be formulated as problem \ref{Kirchhoff} when the prior distribution on $w$ is $\mathcal{N}(0, A_0^{-1})$.
%\begin{align*}
%\hat{w} = \left(\sigma^{-2}\sum_t=1^m a_t a_t^T \right)
%\end{align*}
%The optimal experiment design problem is the same as \ref{Kirchhoff}. In that problem the vectors $a_t$ are the test vectors. 

The dual to \ref{Kirchhoff} is as follows
\begin{align*}
\mbox{maximize} \quad &\sum_{t=1}^{m} (a_t^T Y a_t + y)_+ + n-\log\det(Y) +\tr{A_0 Y}-b y\\
\mbox{subject to} \quad& y\leq 0, \;\; Y \geq 0.
\end{align*} 

%Note that $\tilde{y}_{t} \geq - \frac{2}{\lambda_{\min}(A_0)}$, where $\lambda_{\min}(A_0)$ is the second smallest eigenvalue of $A_0$. This follows from the fact that $\tilde{Y}_{t} = (A_0+\sum_{t=1}^{n} a_t a_t^T \tilde{x}_t)^{-1}$, thus $a_t^T \tilde{Y}_{t} a_t \leq a_t^T A_0^{-1} a_t \leq \frac{2}{\lambda_{\min}(A_0)}$. 

The hard budget constraint in problem \ref{Kirchhoff} can be replaced by the exact penalty function $G(u) = - l (u-b)_{+}$ when $-l$ is a given lower bound on the optimal dual variable corresponding to the constraint. $-l$ is a lower bound on the optimal dual variable corresponding to the constraint if
\begin{align}\label{Kirchhoff-l}
%l > \frac{\max_{t} a_t^T a_t}{\lambda_{\max}(A_0)}.
l > \frac{2}{\lambda_{\min}(A_0)}.
\end{align}

This follows from the fact that by the optimality condition \eqref{dual-optimality}, $Y^* = (A_0+\sum_{t=1}^{n} a_t a_t^T {x}_t^*)^{-1} $. If $y^* \leq -l$, then $a_t^T Y^* a_t + y^* < 0$ for all $t$. This combined with the the primal optimality condition \eqref{primal-optimality} forces $x_t^* = 0$ for all $t$. Therefore, we get $\sum_{t=1}^{m} x_t^* = 0$ and $y^* = 0$, which is a contradiction. %With this choice of $l$, Algorithm $2$ produces a feasible solution to the original problem since if for some $t$, $\tilde{y}_{t} = -l$, then $\tilde{x}_t = 0$.
  %Now, let $G^*(y) = by$ with domain $\{y \;|\;  -\frac{2}{\lambda_{\min}(A_0)}\leq y\leq 0\}$, i.e., $G(u) = -\frac{2}{\lambda_{\min}(A_0)} (b-u)_+$. 
%Let $G(u) = I_{u \leq b}$. 
Similar to the previous problems, we smooth $G$ with:
\begin{align*}
\phi^*(y)=\frac{b}{\gamma} ((y -\frac{\theta}{e-1}) \log(1-\frac{e-1}{\theta} y)-(1+\gamma)y).
\end{align*}
%we smooth $G$, with $M'(y) = \frac{1}{\gamma} \log(1-\frac{1}{\theta} y)$. 
Here $\gamma = \log(1+\frac{l}{\theta})$ and $\theta = \log(1+1/n)$. For this problem we have:
\begin{align*}
\alpha_{\psi,\psi\Box\phi}(\tilde{u}) \geq 1 - (1+\frac{1}{e-1})\gamma + \bar{\alpha}_{\log\det}\geq  - (1+\frac{1}{e-1})\gamma.
\end{align*}
This results in the following competitive ratio bound
\begin{align*}
P_{\rm sim} - \log\det(A_0) &\geq  \frac{1}{1+\left(1+\frac{1}{e-1}\right)\gamma} (D^\star - \log\det(A_0)).
\end{align*}

The proof is included in the Appendix \ref{alpha-derivation}. %Now, by using the inequality $\log(1+\frac{1}{n}) \geq \frac{1}{n \log{2} }$ we conclude that the competitive ratio is $\Omega({1}/{\log \frac{n \max_{t} \normu{a_t}^2}{\lambda_{\min}(A_0)}})$. 
In the graph formation problem, %$\normu{a_t}^2 = 2$ and 
$\lambda_{\min} (A_0) = \lambda_{2}(L_0)$, that is the second smallest eigenvalue of $L_0$. For a connected graph $\lambda_2(L_0) \in [1/n,\; n]$ with lower bound achieved by the path graph and the upper bound achieved by the complete graph. Therefore, the competitive ratio in this case is $\Omega(\frac{1}{\log n})$.

\subsection{Computing optimal smoothing for separable functions on $\mathbb{R}_{+}^n$}
\label{optimal-smoothing}

We now tackle the problem of finding the optimal smoothing for separable functions on the positive orthant, which as we show in an example at the end of this section is not necessarily given by Nesterov smoothing. 
Given a separable monotone $\psi(u) = \sum_{i=1}^{n}\psi_i(u_i)$ and ${\psi_S}(u) = \sum_{i=1}^{n}{\psi_S}_i(u_i)$ on $\mathbb{R}_{+}^n$ we have that
%\begin{align*}
$\bar{\alpha}_{\psi,{\psi_S}} \geq \min_{i}\bar{\alpha}_{\psi_i,{\psi_S}_i}$.
%\end{align*}

To simplify the notation, drop the index $i$ and consider $\psi: \mathbb{R}_{+} \mapsto \mathbb{R}$.
We formulate the problem of finding ${\psi_S}$ to maximize $\alpha_{\psi,{\psi_S}}$ as an optimization problem. In section \ref{discussion} we discuss the relation between this optimization method and the optimal algorithm presented in \cite{devanur2012online}. 
%To simplify the notation, we drop the index $i$ and consider $\psi: \mathbb{R}_{+} \mapsto \mathbb{R}$. 
We set ${{\psi}_{M}}(u) = \int_{0}^{u} y(s) ds$ with $y$ a continuous function, and state the infinite dimensional convex optimization problem with $y$ as a variable,
\begin{align}\label{cont-problem}
 \mbox{minimize}&{ \quad  \beta}\\ \notag
\mbox{subject to}& \quad  \int_{0}^{u} y(s) ds  - \psi^*(y(u)) \leq \beta  \psi(u), \qquad \forall u \in [0,\infty)\\ \notag
&\quad y \in C[0,\infty).
\end{align}

%
%The objective $\beta$ is a substitution for $1-\bar{\alpha}_{\psi,{\psi_S}}$. 
where $\beta=1-\bar{\alpha}_{\psi,{\psi_S}}$ (theorem \ref{thm1} describes the dependence of the competitive ratios on this parameter).  Note that we have not imposed any condition on $y$ to be non-increasing (i.e., the corresponding ${\psi_S}$ to be concave). The next lemma establishes that every feasible solution to the problem \eqref{cont-problem} can be turned into a non-increasing solution. 
 \begin{lemma}\label{non-increasing-solution}
 Let $(y,\beta)$ be a feasible solution for problem \eqref{cont-problem} and define $\bar{y}(u) = \inf_{s \leq u} y(s)$. Then $(\bar{y}, \beta)$ is also a feasible solution for problem \eqref{cont-problem}. 
% That is, every solution to \eqref{cont-problem} can be turned into a non-increasing one. 
  \end{lemma}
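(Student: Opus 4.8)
The plan is to work with the constraint functional $\Phi_y(u) := \int_0^u y(s)\,ds - \psi^*(y(u)) - \beta\psi(u)$, so that feasibility of $(y,\beta)$ reads $\Phi_y(u) \le 0$ for all $u$, and the goal becomes $\Phi_{\bar y}(u) \le 0$ for all $u$. First I would record that $\bar y(u) = \inf_{s\le u} y(s) = \min_{s \in [0,u]} y(s)$ is continuous and non-increasing (the running minimum of a continuous function), so $(\bar y,\beta)$ is an admissible candidate and only the inequality needs checking. The central difficulty is that passing from $y$ to $\bar y$ has two opposing effects: since $\bar y \le y$ pointwise we gain $\int_0^u \bar y \le \int_0^u y$, but because $\psi^*$ is non-decreasing (as $(\psi^*)' = (\psi')^{-1} \ge 0$) and $\bar y(u) \le y(u)$, the term $-\psi^*(\bar y(u)) \ge -\psi^*(y(u))$ moves the wrong way. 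A naive pointwise comparison therefore fails, and I would instead exploit the monotone structure of $\bar y$ together with a convexity argument.

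Two facts drive the proof. Since $\psi(0)=0$, the conjugate satisfies $\psi^*(z) = \inf_v(zv - \psi(v)) \le 0$ for every $z$. And I would partition $[0,\infty)$ into the \emph{record set} $R = \{u : \bar y(u) = y(u)\}$ and the open set $\{u : \bar y(u) < y(u)\}$; on each component of the latter $\bar y$ is constant (its running minimum is attained strictly earlier, and a new minimum inside the component would force a record point there), and the endpoints of each component lie in $R$ by continuity. On $R$ the claim is immediate: if $\bar y(u) = y(u)$ then $\Phi_{\bar y}(u) = \int_0^u \bar y - \psi^*(y(u)) - \beta\psi(u) \le \int_0^u y - \psi^*(y(u)) - \beta\psi(u) = \Phi_y(u) \le 0$, using only $\int_0^u\bar y \le \int_0^u y$.

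For a bounded flat interval $[a,b]$ of $\bar y$ with value $c$, I would note that there $\Phi_{\bar y}(u) = \int_0^a \bar y + (u-a)c - \psi^*(c) - \beta\psi(u)$, whence $\Phi_{\bar y}'(u) = c - \beta\psi'(u)$; since $\psi'$ is non-increasing and $\beta \ge 0$, $\Phi_{\bar y}'$ is non-decreasing, so $\Phi_{\bar y}$ is convex on $[a,b]$ and attains its maximum at an endpoint. As both endpoints belong to $R$ (there $y = c = \bar y$), the previous step gives $\Phi_{\bar y}(a),\Phi_{\bar y}(b) \le 0$, hence $\Phi_{\bar y}(u) \le \max\{\Phi_{\bar y}(a),\Phi_{\bar y}(b)\} \le 0$ throughout.

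The main obstacle is the remaining case of an \emph{unbounded} flat interval $[a,\infty)$, where there is no right endpoint to invoke convexity. Here I would argue that feasibility of $y$ forces the flat value to satisfy $c \le \beta\lambda$, where $\lambda := \lim_{s\to\infty}\psi'(s) \ge 0$: if instead $c > \beta\lambda$, then using $y \ge c$ on $[a,\infty)$ together with $\psi^* \le 0$ gives $\Phi_y(s) \ge \mathrm{const} + \int_0^s (c - \beta\psi'(r))\,dr \to +\infty$ (the integrand is eventually bounded below by $\tfrac{c-\beta\lambda}{2}>0$), contradicting $\Phi_y \le 0$. Once $c \le \beta\lambda \le \beta\psi'(u)$, we have $\Phi_{\bar y}'(u) = c - \beta\psi'(u) \le 0$ on $[a,\infty)$, so $\Phi_{\bar y}$ is non-increasing there and $\Phi_{\bar y}(u) \le \Phi_{\bar y}(a) \le 0$ with $a \in R$. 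This exhausts all cases. The only points requiring genuine care are the topological justification that $\bar y$ is constant on each component of $\{\bar y < y\}$ with endpoints in $R$, and the growth estimate closing the unbounded case.
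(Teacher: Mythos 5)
Your proof is correct and shares the paper's overall skeleton --- both arguments split $[0,\infty)$ into the set where $\bar y = y$, on which the claim follows from $\int_0^u \bar y \le \int_0^u y$ alone, and its open complement, written as a countable union of intervals on which $\bar y$ is constant with endpoints (when finite) lying in the first set --- but you execute the two nontrivial steps by a genuinely different mechanism. On a bounded flat interval the paper splits it at the point where the right derivative $\psi'$ crosses the flat value $c$ and applies a mediant-type ratio inequality ($c_1/c_2 \ge (c_1+d_1)/(c_2+d_2)$) to push the constraint to the endpoints; you instead observe that $\Phi_{\bar y}(u) = \mathrm{const} + (u-a)c - \beta\psi(u)$ is convex there (linear plus $-\beta\psi$ with $\beta \ge 0$), so its maximum sits at an endpoint. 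This is cleaner, requires no case analysis on where $\psi'$ crosses $c$, and sidesteps the side conditions ($\psi(a_i)>0$ and $c_1\ge c_2$) that the ratio argument implicitly needs. For an unbounded interval $[a,\infty)$ the paper distinguishes bounded from unbounded $\psi$ and argues separately with $\limsup$ computations (proving $y(a_i)=0$ in the bounded case); you derive the single inequality $c \le \beta\lim_{s\to\infty}\psi'(s)$ from the divergence of $\Phi_y$ otherwise, after which $\Phi_{\bar y}$ is non-increasing on $[a,\infty)$ --- this unifies the paper's two subcases, since a bounded non-decreasing concave $\psi$ has $\psi'\to 0$, which forces $c=0$ and recovers exactly the paper's conclusion. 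Two small points you should make explicit: $\beta \ge 0$ follows from feasibility because $y \ge 0$ (as $\dom\psi^* \subset \mathbb{R}_+$ for non-decreasing $\psi$) and $\psi^* \le 0$, so the left side of the constraint is non-negative wherever $\psi(u)>0$; and since $\psi$ need not be differentiable, the convexity and monotonicity claims should be phrased through $\Phi_{\bar y}(u)-\Phi_{\bar y}(a) = \int_a^u \bigl(c - \beta\psi'(r)\bigr)\,dr$ with $\psi'$ the right derivative, exactly as you already do in the growth estimate.
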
 
%Thus, every solution to \eqref{cont-problem} can be turned into a non-increasing one.  
In particular if $(y,\beta)$ is an optimal solution, then so is $(\bar{y}, \beta)$. The proof is given in the supplement.
Revisiting the adwords problem, we observe that the optimal solution is given by $y(u) = \left(\frac{e - \exp(u)}{e-1}\right)_{+}$, which is the derivative of the smooth function we derived using Nesterov smoothing in section \ref{nesterov-section}. The optimality of this $y$ can be established by providing a dual certificate, a measure $\nu$ corresponding to the inequality constraint, that together with $y$ satisfies the optimality condition. If we set $d \nu = f(u) \;du$ with $f(u) = \exp{(1-u)}/(e-1)$, the optimality conditions are satisfied with $\beta = (1-1/e)^{-1}$.\begin{full}
\begin{align*}
& \int_{u=0}^{\infty} f(u) \psi(u) \; du= 1 ,\;\;  f \succeq 0,\\ 
 &\int_{s = u}^{\infty} f(s)  \; ds\in f(u) \partial \psi^*(y(u)),  \qquad \forall u \geq 0,\\ 
& \int_{0}^{u} y(s) ds  - \psi^*(y(u)) \leq \beta \psi(u), \qquad  \forall u \geq 0,\\ 
 & f(u) (\int_{0}^{u} y(s) ds  - \psi^*(y(u)) - \beta  \psi(u)) = 0,\qquad  \forall u \geq 0.
\end{align*} 
\end{full}

%If a bound $u_{\max}$ on $\sum_{t=1}^{m} A_t F_t$ is known then we can restrict $t$ to $[0,\; u_{\max}]$ and the previous problem turns into a finite dimensional convex program. 
Also note that if $\psi$ plateaus (e.g.,  as in the adwords objective), then one can replace problem \eqref{cont-problem} with a problem over a finite horizon.

 \begin{thm}\label{plateau}
 Suppose $\psi(u) = c$ on $[u' , \infty)$. 
 Then problem \eqref{cont-problem} is equivalent to the following problem,
\begin{align}\label{finite-cont-problem}
 \mbox{\rm minimize}&{ \quad  \beta}\\ \notag
\mbox{\rm subject to}& \quad  \int_{0}^{u} y(s) ds  - \psi^*(y(u)) \leq \beta  \psi(u), \qquad \forall u \in [0,u']\\ \notag
& y(u') = 0, \quad y \in C[0, u'].
\end{align}

\end{thm}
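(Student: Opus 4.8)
The plan is to prove that the two problems have the same optimal value by establishing feasibility correspondences in both directions, relying on the fact that the plateau pins down $\psi^*$ at the origin. Since $\psi$ is non-decreasing with $\psi(0)=0$ and equals $c$ on $[u',\infty)$, its concave conjugate satisfies $\psi^*(0)=\inf_u(-\psi(u))=-c$, and $\psi^*$ is non-decreasing on $[0,\infty)$ (for $0\le y_1\le y_2$ the infimum over $u\ge 0$ of $yu-\psi(u)$ is monotone in $y$). Feasibility of \eqref{cont-problem} also forces $y(u)\ge 0$ for every $u$, since otherwise $\psi^*(y(u))=-\infty$ and the constraint is violated. Let $\beta^\star_{\infty}$ and $\beta^\star_{\mathrm{fin}}$ denote the optimal values of \eqref{cont-problem} and \eqref{finite-cont-problem}.

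For the easy inequality $\beta^\star_{\infty}\le\beta^\star_{\mathrm{fin}}$, I would take any feasible $(y,\beta)$ for \eqref{finite-cont-problem} and extend it by $y(u)=0$ on $(u',\infty)$; the extension is continuous precisely because $y(u')=0$. On the plateau both $\int_0^u y(s)\,ds$ and $\psi(u)=c$ are constant and $\psi^*(y(u))=\psi^*(0)=-c$, so for every $u\ge u'$ the extended constraint reduces verbatim to the constraint at $u'$, which holds by assumption. Hence every finite-feasible pair is feasible for \eqref{cont-problem} with the same $\beta$.

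For the reverse inequality I would start from a feasible $(y,\beta)$ of \eqref{cont-problem} and, invoking Lemma \ref{non-increasing-solution}, assume $y$ is non-increasing. Set $g(u)=\int_0^u y(s)\,ds-\psi^*(y(u))$. On $[u',\infty)$, for $u_2>u_1$ one has $\int_{u_1}^{u_2} y\ge 0$ and $\psi^*(y(u_2))\le\psi^*(y(u_1))$ (as $y$ is non-increasing and $\psi^*$ non-decreasing), so $g(u_2)\ge g(u_1)$; thus $g$ is non-decreasing on the plateau. The bound $g\le\beta c$ then forces $\int_0^\infty y<\infty$ and $y(u)\to 0$, and the supremum of the tail constraints equals the single limiting constraint $\int_0^\infty y + c\le\beta c$. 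In particular $\int_0^{u'} y + c\le\beta c$, which is exactly the finite constraint at $u'$ evaluated with value $0$.

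The one remaining obstacle, and the technical heart of the argument, is the continuity requirement $y(u')=0$ in \eqref{finite-cont-problem}. If $\int_{u'}^\infty y=0$ then $y\equiv 0$ on $[u',\infty)$, so $y(u')=0$ and the restriction of $y$ to $[0,u']$ is already feasible for \eqref{finite-cont-problem} with the same $\beta$. Otherwise the inequality $\int_0^{u'} y + c\le\beta c$ is strict, and I would exploit that slack to flatten $y$ down to $0$ on a short interval $[u'-\delta,u']$, verifying that for small $\delta$ the perturbed profile $\tilde y$ still satisfies $\int_0^u\tilde y-\psi^*(\tilde y(u))\le\beta\psi(u)$ there, using $-\psi^*(\tilde y(u))\le c$, the continuity $\psi(u)\to c$ as $u\to u'$, and the strict slack. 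This produces a finite-feasible pair with value $\beta$, yielding $\beta^\star_{\mathrm{fin}}\le\beta^\star_{\infty}$. I expect this flattening estimate near $u'$ to be the most delicate point, and it is perhaps cleanest to phrase it as a limiting argument establishing $\beta^\star_{\mathrm{fin}}\le\beta+\varepsilon$ for every $\varepsilon>0$; combined with the easy direction the two optimal values then coincide.
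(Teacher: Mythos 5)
Your proposal is correct and follows essentially the same route as the paper's proof: reduce to a non-increasing $y$ via Lemma \ref{non-increasing-solution}, observe $y\geq 0$ and $y(u)\to 0$, show the tail constraints collapse to the single limiting constraint $\int_0^\infty y + c \leq \beta c$, truncate $y$ to zero beyond $u'$, and repair the possible discontinuity at $u'$ with a short linear ramp to zero. The only cosmetic difference is that the paper phrases the repair as a sequence of continuous approximants (with a diagonal argument over an optimizing sequence, since the infimum need not be attained) while you phrase it as an $\varepsilon$-slack estimate; both rest on the same facts.
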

%%%%%We can discretize this problem by turning $y$ into a vector in $\mathbb{R}^{\infty}$ and setting the discretization step $h$.
%%%%% 
%%%%%\begin{align}
%%%%%\mbox{minimize}  \quad & \beta \\ \notag
%%%%%\mbox{subject to} \quad & h \sum_{s=1}^{t} y_{s}  - \psi^*(y_t) \leq \beta  \psi(h ) \qquad \forall t \in \mathbb{N}\\ \notag
%%%%%%& y_t \leq y_{t-1} \qquad \forall t \in \mathbb{N}
%%%%%\end{align} 
%%%%%
%%%%%
%%%%% The dual problem is as follows 
%%%%% 
%%%%%\begin{align}
%%%%%\mbox{maximize}  \quad & \sum_{t=1}^{d} z_t \psi\left(\frac{h \sum_{s=t}^{d} z_s}{z_t}\right) \\ \notag
%%%%%\mbox{subject to} \quad & \sum_{t=1}^{d} z_t \psi(h t) = 1\\ \notag
%%%%%& z \geq 0
%%%%%\end{align}
%
So for a function $\psi$ with a plateau, one can discretize problem \eqref{finite-cont-problem} to get a finite dimensional problem,

\begin{align}\label{disc-problem}
 \mbox{minimize}& \quad \beta \\ \notag
\mbox{subject to}& \quad h \sum_{s=1}^{t} y[s]  - \psi^*(y[t]) \leq \beta  \psi(h t), \qquad \forall t \in [d]\\ \notag
& \quad y[d] = 0, % \leq y_{t-1} \qquad \forall t \in [d],
\end{align}

where $h = {u'}/{d}$ is the discretization step. Figure \ref{fig-pl} shows the optimal smoothing for the piecewise linear function $\psi(u) = \min(.75,\; u,\; .5 u+.25)$ by solving problem \eqref{disc-problem}. We point out that the optimal smoothing for this function is \emph{not} given by Nesterov's smoothing (even though the optimal smoothing can be derived by Nesterov's smoothing for a piecewise linear function with only two pieces, like the adwords cost function). 
Figure \ref{fig-psistardiff} shows the difference between the conjugate of the optimal smoothing function and $\psi^*$ for the piecewise linear function, which we can see is not concave. 

In cases where a bound $u_{\max}$ on $\sum_{t=1}^{m} A_t F_t$ is known, we can restrict $t$ to $[0, u_{\max}]$ and discretize problem \eqref{cont-problem} over this interval. However, the conclusion of Lemma \ref{non-increasing-solution} does not hold for a finite horizon and we need to impose additional linear constraints  $y[t] \leq y[t-1]$ to ensure the monotonicity of $y$. We find the optimal smoothing for two examples of this kind: $\psi(u) = \log(1+u)$ over $[0, 100]$ (Figure \ref{fig-log}), and $\psi(u) = \sqrt{u}$ over $[0, 100]$ (Figure \ref{fig-root}). In Figure \ref{fig-logComp}, we show the competitive ratio achieved with the optimal smoothing of $\psi(u) = \log(1+u)$ over $[0, u_{\max}]$ as a function of $u_{\max}$. Figure \ref{fig-rootComp} depicts this quantity for $\psi(u) = \sqrt{u}$.
 \begin{figure}
 \begin{subfigure}{0.33\textwidth}
\centering    
                \includegraphics[width=.80\textwidth]{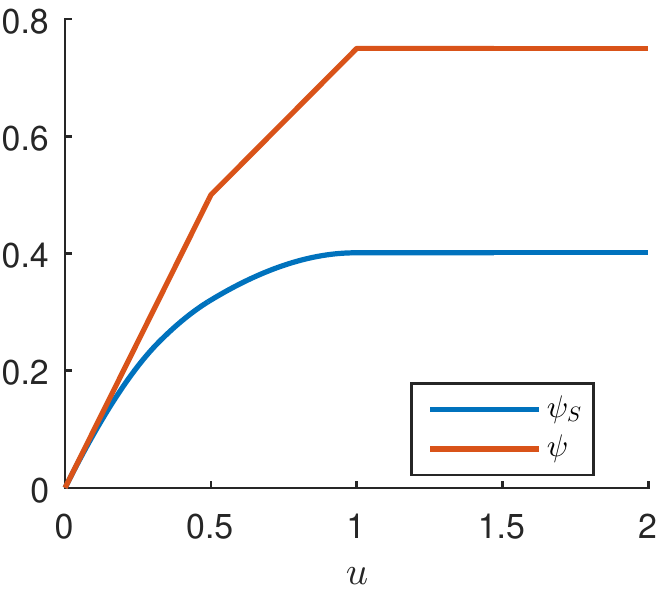}        
           %     \caption{$\psi(u) = \sqrt{u}$ on $[0,\;1]$}
           \caption{}
              \label{fig-pl}        
 \end{subfigure}
  \begin{subfigure}{0.33\textwidth}
\centering    
                \includegraphics[width=.80\textwidth]{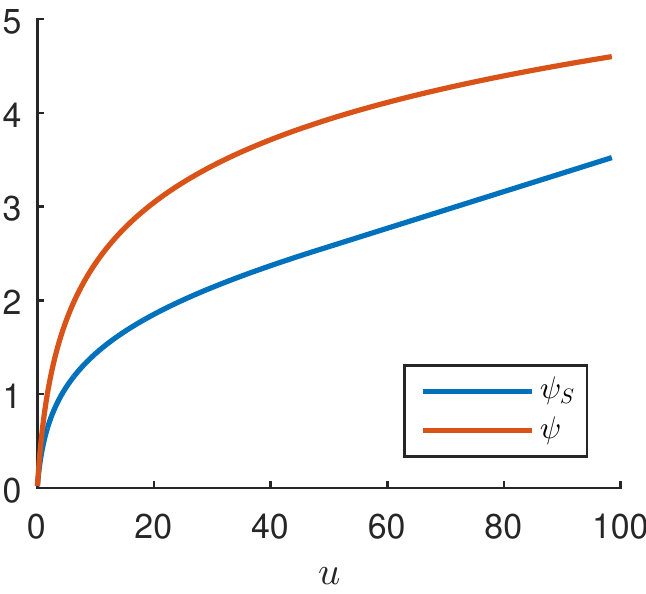}        
            \caption{}
              \label{fig-log}     
 \end{subfigure}
 \begin{subfigure}{0.33\textwidth}
\centering    
                \includegraphics[width=.80\textwidth]{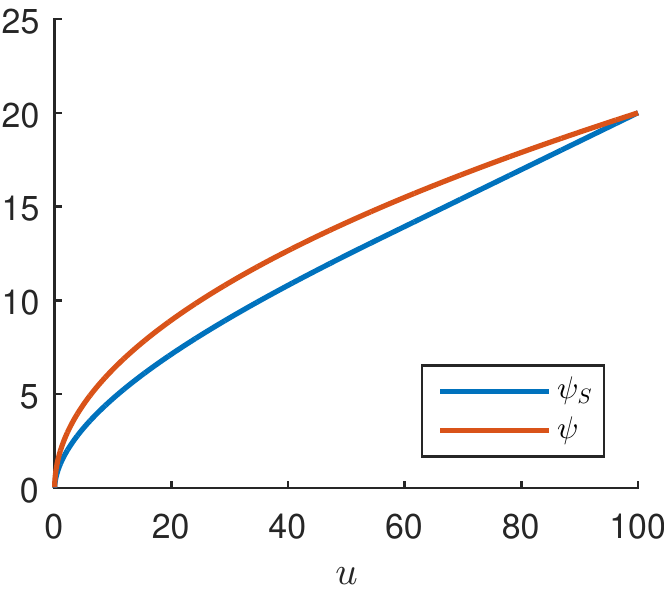}   
                 \caption{}
              \label{fig-root}               
 \end{subfigure}
 \begin{subfigure}{0.33\textwidth}
\centering    
                \includegraphics[width=.80\textwidth]{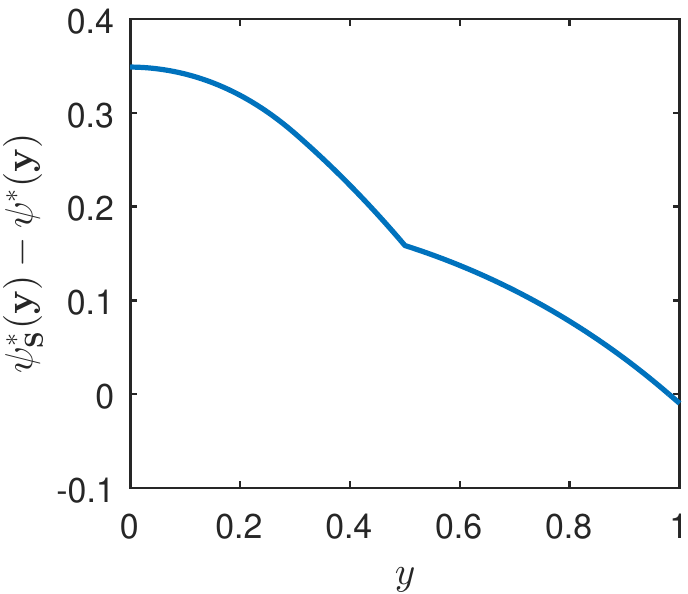}        
           \caption{}
              \label{fig-psistardiff}          
 \end{subfigure}
  \begin{subfigure}{0.33\textwidth}
\centering    
                \includegraphics[width=.80\textwidth]{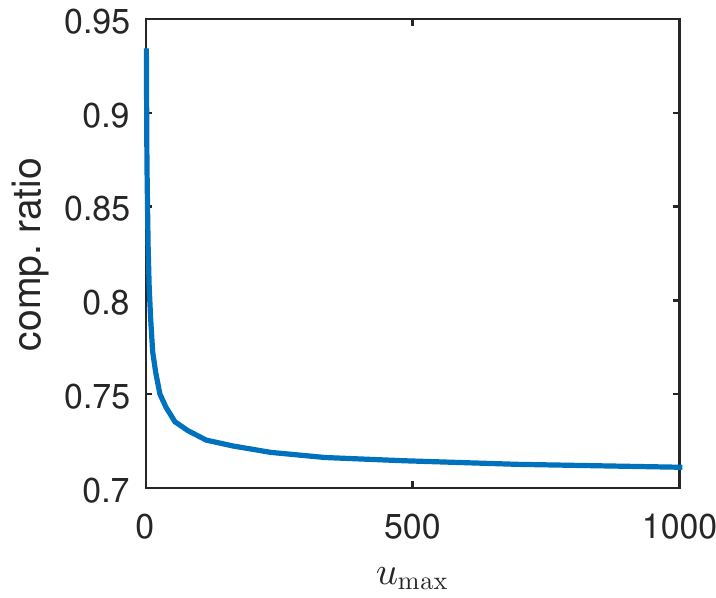}        
           \caption{}
              \label{fig-logComp}           
 \end{subfigure}
 \begin{subfigure}{0.33\textwidth}
\centering    
                \includegraphics[width=.80\textwidth]{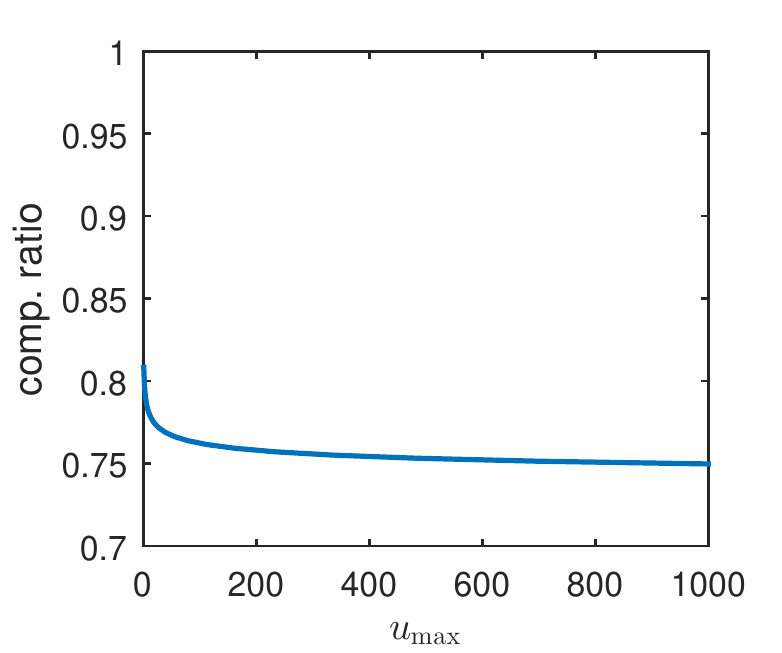}   
           \caption{}
              \label{fig-rootComp}         
 \end{subfigure}
 \caption{Optimal smoothing for $\psi(u) = \min(.75,\; u,\; .5 u+.25)$  (a), $\psi(u) = \log(1+u)$ over $[0, 100]$ (b), and $\psi(u) = \sqrt{u}$ over $[0, 100]$ (c). The competitive ratio achieved by the optimal smoothing as a function of $u_{\max}$ for $\psi(u) = \log(1+u)$ (e) and $\psi(u) = \sqrt{u}$ (f). ${\psi_S}^* - \psi^*$ for the piecewise linear function (d). }
\end{figure}

%-----------------------------------Local Norm Teknik -------------------------------------------------------
\subsection{Bounds for the sequential algorithm}
In this section we provide a lower-bound on the competitive ratio of the sequential algorithm (Algorithm \ref{algorithm1}). Based on this competitive ratio bound we modify Problem \eqref{cont-problem} for designing the smoothing function for the sequential algorithm. 

\begin{thm} Suppose ${\psi_S}$ is differentiable on an open set containing $K$ and satisfies Assumption~\ref{antitone}.  In addition suppose there exists $c \in K$ is such that $A_t F_t \leq_{K} c$ for all $t$, then
\begin{align*}
P_{\rm seq} \geq \frac{1}{1-\bar{\alpha}_{\psi,{\psi_S}} + \kappa_{c,\psi,{\psi_S}}} D^\star,
\end{align*}
where $\kappa$ is given by
\begin{align}\label{kappa}
\kappa_{c,\psi,{\psi_S}} = \inf \{r ~|~ \inner{c}{\nabla {\psi_S}(0) - \nabla {\psi_S}(u)} \leq r \psi(u) , u \in K\}
\end{align}

\end{thm}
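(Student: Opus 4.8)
The plan is to mirror the proof of Theorem \ref{thm1}, with the one new ingredient being a bound on the ``regret''-type correction term that appears for the sequential algorithm, which is exactly what $\kappa_{c,\psi,{\psi_S}}$ is built to control. First I would establish, verbatim from the argument in Theorem \ref{thm1} applied to ${\psi_S}$ (legitimate since ${\psi_S}$ satisfies Assumption \ref{antitone}), that $D_{\rm seq} \geq D^\star$: the relation $\sum_{s=1}^{t-1} A_s\hat{x}_s \leq_K \sum_{s=1}^m A_s\hat{x}_s$ together with Assumption \ref{antitone} gives $\hat{y}_t \geq_{K^*} \hat{y}_{m+1}$, hence $\sigma_t(A_t^T\hat{y}_t) \geq \sigma_t(A_t^T\hat{y}_{m+1})$ for every $t$, and summing and applying weak duality yields $D_{\rm seq} \geq D^\star$. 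Combining this with the smoothed duality-gap bound \eqref{smooth_duality_gap}, and writing $u := \sum_{t=1}^m A_t\hat{x}_t$ so that $P_{\rm seq} = \psi(u)$, I obtain
\begin{align*}
D^\star \leq {\psi_S}(u) - \psi^*(\hat{y}_{m+1}) + \sum_{t=1}^m \inner{A_t\hat{x}_t}{\hat{y}_t - \hat{y}_{t+1}}.
\end{align*}

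The heart of the proof, and the step I expect to be the main obstacle, is bounding the last sum by $\kappa_{c,\psi,{\psi_S}}\,\psi(u)$. Since ${\psi_S}$ is differentiable, the supergradients are unique gradients, so $\hat{y}_t = \nabla{\psi_S}(\sum_{s<t}A_s\hat{x}_s)$ and $\hat{y}_{m+1} = \nabla{\psi_S}(u)$; because $\sum_{s<t}A_s\hat{x}_s \leq_K \sum_{s\leq t}A_s\hat{x}_s$, Assumption \ref{antitone} gives $\hat{y}_t - \hat{y}_{t+1} \in K^*$ for each $t$. The hypothesis $A_tF_t \leq_K c$ yields $c - A_t\hat{x}_t \in K$, so pairing the two and using the definition of $K^*$ gives $\inner{\hat{y}_t - \hat{y}_{t+1}}{A_t\hat{x}_t} \leq \inner{\hat{y}_t - \hat{y}_{t+1}}{c}$. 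Summing then telescopes:
\begin{align*}
\sum_{t=1}^m \inner{A_t\hat{x}_t}{\hat{y}_t - \hat{y}_{t+1}} \leq \inner{c}{\hat{y}_1 - \hat{y}_{m+1}} = \inner{c}{\nabla{\psi_S}(0) - \nabla{\psi_S}(u)} \leq \kappa_{c,\psi,{\psi_S}}\,\psi(u),
\end{align*}
the final inequality being exactly the defining property \eqref{kappa} of $\kappa$ at $u \in K$. The delicacy is getting the cone orderings to line up so that replacing the data-dependent increments $A_t\hat{x}_t$ by the uniform upper bound $c$ is valid and collapses by telescoping; differentiability is what makes $\hat{y}_1$ and $\hat{y}_{m+1}$ coincide with the gradients appearing in \eqref{kappa}.

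It remains to control $-\psi^*(\hat{y}_{m+1})$. Since $\hat{y}_{m+1} \in \partial{\psi_S}(u)$, the definition of $\alpha_{\psi,{\psi_S}}(u)$ gives $\psi^*(\hat{y}_{m+1}) \geq {\psi_S}(u) + (\alpha_{\psi,{\psi_S}}(u)-1)\psi(u)$, i.e.\ $-\psi^*(\hat{y}_{m+1}) \leq -{\psi_S}(u) + (1-\alpha_{\psi,{\psi_S}}(u))\psi(u)$. Substituting this and the $\kappa$ bound into the displayed inequality for $D^\star$, the ${\psi_S}(u)$ terms cancel and I get $D^\star \leq (1 - \alpha_{\psi,{\psi_S}}(u) + \kappa_{c,\psi,{\psi_S}})\psi(u)$. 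Finally, using $\alpha_{\psi,{\psi_S}}(u) \geq \bar{\alpha}_{\psi,{\psi_S}}$ together with $\psi(u) = P_{\rm seq} \geq 0$ (which holds for monotone non-decreasing $\psi$ with $\psi(0)=0$) upgrades this to $D^\star \leq (1-\bar{\alpha}_{\psi,{\psi_S}} + \kappa_{c,\psi,{\psi_S}})\,P_{\rm seq}$, and dividing by the positive factor $1-\bar{\alpha}_{\psi,{\psi_S}}+\kappa_{c,\psi,{\psi_S}}$ gives the claim.
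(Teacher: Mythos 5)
Your proof is correct and follows essentially the same route as the paper: establish $D_{\rm seq}\geq D^\star$ via Assumption~\ref{antitone} as in Theorem~\ref{thm1}, bound the correction sum by replacing $A_t\hat{x}_t$ with $c$ and telescoping to $\inner{c}{\nabla\psi_S(0)-\nabla\psi_S(u)}\leq\kappa_{c,\psi,\psi_S}\,\psi(u)$, and finish with the definition of $\bar{\alpha}_{\psi,\psi_S}$. If anything, your write-up is slightly more careful than the paper's, which drops the $-\psi^*(\hat{y}_{m+1})$ term from its displayed combination and writes $\hat{y}_0$ where $\hat{y}_1=\nabla\psi_S(0)$ is meant.
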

\begin{proof}
Since ${\psi_S}$ satisfies Assumption \ref{antitone}, we have $\hat{y}_{t+1} \leq_{K^*} \hat{y}_t$. Therefore, we can write:
\begin{align}\notag 
\sum_{t=1}^{m} \inner{A_t \hat{x}_t}{\hat{y}_t -\hat{y}_{t+1}  } &\leq \sum_{t=1}^{m} \inner{c}{\hat{y}_t -\hat{y}_{t+1}  } \\ \label{local_bound}
&=  \inner{c}{\hat{y}_0 -\hat{y}_{m+1}  }
\end{align}
Now by combining \ref{smooth_duality_gap} with \ref{local_bound}, we get
\begin{align*}
&D_{\rm seq} \leq {\psi_S}\left(\sum_{t=1}^{m} A_t \hat{x}_t\right) + \inner{c}{\nabla {\psi_S}(0) - \nabla {\psi_S}\left(\sum_{t=1}^{m} A_t \hat{x}_t\right)}.
\end{align*}
The conclusion of the theorem follows from the definition of $\bar{\alpha}_{\psi,{\psi_S}}$, $\kappa_{c,\psi,{\psi_S}}$ and the fact that $\hat{D} \geq D^\star$.

\end{proof}

Based on the result of the previous theorem we can modify the optimization problem set up in Section \ref{optimal-smoothing} for separable functions on $\mathbf{R}_+^n$ to maximize the lower bound on the competitive ratio of the sequential algorithm. Note that in this case we have $\kappa_{c,\psi,{\psi_S}} \leq \max_{i}\kappa_{c_i,\psi_i,{\psi_S}_i} $. Similar to the previous section to simplify the notation we drop the index $i$ and assume $\psi$ is a function of a scalar variable. The optimization problem for finding $\psi_S$ that minimizes $\kappa_{c,\psi,{\psi_S}} -\bar{\alpha}_{\psi,{\psi_S}} $ is as follows:
\begin{align}\label{cont-problem-seq}
 \mbox{minimize}&{ \quad  \beta}\\ \notag
\mbox{subject to}& \quad  \int_{0}^{u} y(s) ds  + c (\psi'(0) - y(u))- \psi^*(y(u)) \leq \beta  \psi(u), \qquad \forall u \in [0,\infty)\\ \notag
&\quad y \in C[0,\infty).
\end{align}

In the case of Adwords, the optimal solution is given by 
$$\beta = \frac{1}{1 - \exp(\frac{-1}{c+1})}, \quad y(u) = \beta \left(1-\exp{\left(\frac{u-1}{1+c}\right)} \right)_{+},$$
which gives a competitive ratio of $1 - \exp\left(\frac{-1}{c+1}\right).$ In Figure \ref{fig-seq-log} we have plotted the competitive ratio achieved by solving problem \ref{cont-problem-seq} for $\psi(u) = \log\det(1+u)$ with $u_{\max} = 100$ as a function of $c$. Figure \ref{fig-seq-pl} shows the competitive ratio as a function of $c$ for the piecewise linear function $\psi(u) = \min(.75,\; u,\; .5 u+.25)$.

 \begin{figure}
 \begin{subfigure}{0.47\textwidth}
\centering    
                \includegraphics[width=.80\textwidth]{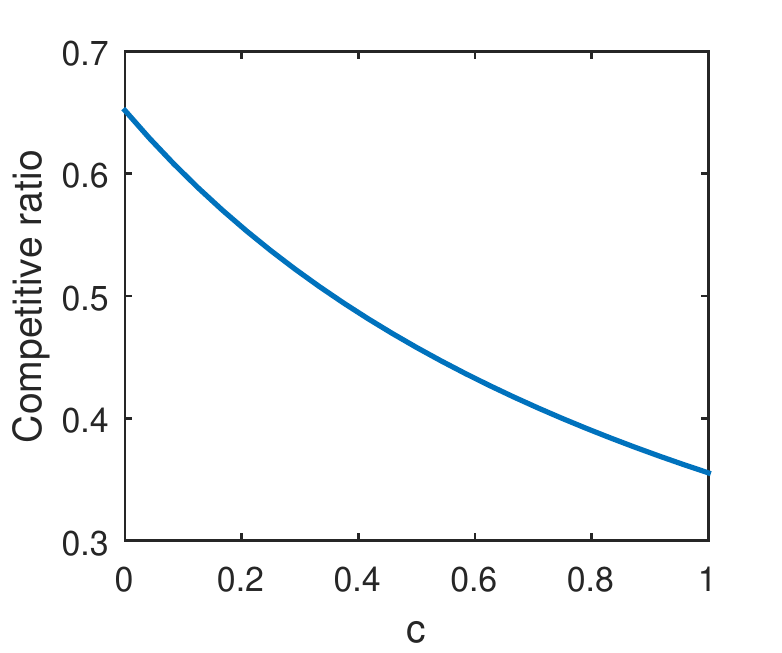}        
           %     \caption{$\psi(u) = \sqrt{u}$ on $[0,\;1]$}
           \caption{}
              \label{fig-seq-pl}        
 \end{subfigure}
  \begin{subfigure}{0.47\textwidth}
\centering    
                \includegraphics[width=.80\textwidth]{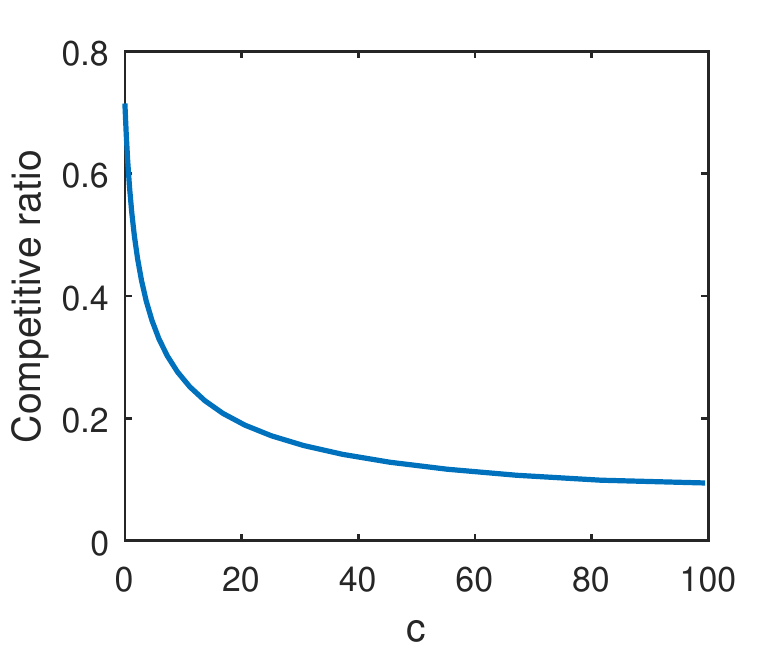}        
            \caption{}
              \label{fig-seq-log}     
 \end{subfigure}
 \caption{The competitive ratio achieved by the optimal smoothing for the sequential algorithm as a function of $c$ for $\psi(u) = \min(.75,\; u,\; .5 u+.25)$ (a) and $\psi(u) = \log(1+u)$ with $u_{\max} = 100$ (b). } \label{fig-seq}
\end{figure}
%------------------------------------------------------------------------------------------------------------------------------------------------%
%                                         																					    %
%                                                                  Discussion                                     								    %
%																											    %
%------------------------------------------------------------------------------------------------------------------------------------------------ %

\section{Discussion and related work}
\label{discussion}

We discuss results and papers from two communities, computer science theory and machine learning, related to this work. 

\paragraph{Online convex optimization.}
In \cite{devanur2012online}, the authors proposed an optimal algorithm for adwords with differentiable concave returns (see examples in section \ref{CA}). Here, ``optimal'' means that they construct an instance of the problem for which competitive ratio bound cannot be improved, hence showing the bound is tight. 
The algorithm is stated and analyzed for a twice differentiable, separable $\psi(u)$.
% = \sum_{i=1}^{n}\psi_i(u_i)$.
% with $\psi_1 = \psi_2 = \ldots={\psi_S}$. 
%The algorithm assigns fractional solutions to the primal variables, and the 
The assignment rule for primal variables in their proposed algorithm is explained as a continuous process. A closer look reveals that this algorithm 
%is similar to 
falls in the framework of algorithm \ref{algorithm2}, with the only difference being that at each step, $(\tilde{x}_t, \tilde{y}_t)$ are chosen such that
\begin{nips}
  \[
\begin{array}{l}
   \tilde{x}_t \in \argmax \inner{x}{A_t^{T} \tilde{y}_t}\\
    \forall i\in [n]: \quad \tilde{y}_{t,i} = \nabla \psi_i(v_i(u_i)), \quad u_i = (\sum_{t=1}^{t} A_{s} \tilde{x}_s)_{i},
\end{array}
\]
\end{nips}
where $v_i : \mathbb{R}_{+} \mapsto \mathbb{R}_{+}$ is an increasing differentiable function given as a solution of a nonlinear differential equation that involves $\psi_i$ and may not necessarily have a closed form. The competitive ratio is also given based on the differential equation. They prove that this gives the optimal competitive ratio for the instances where $\psi_1 = \psi_2 = \ldots={\psi_n}$. 
 
Note that this is equivalent of setting ${\psi_S}_i(u_i) =\psi(v_i(u_i)))$. Since $v_i$ is nondecreasing ${\psi_S}_i$ is a concave function. On the other hand, given a concave function ${\psi_S}_i(\mathbb{R}_{+}) \subset \psi_{i}(\mathbb{R}_{+}) $, we can set $v_i: \mathbb{R}_{+} \mapsto \mathbb{R}_{+}$ as $ v_i (u) = \inf\{z \; | \; \psi_i(z) \geq {\psi_S}_i(u)\}$.
Our formulation in section \ref{optimal-smoothing} provides a \emph{constructive} way of finding the optimal smoothing. It also applies to non-smooth $\psi$.

Recently, authors in \cite{azar2014online,buchbinder2014online,chan2015online} have provided a primal-dual online algorithm for the dual problem \eqref{fenchel-dual} that corresponds to the non-monotone primal objective $\psi\left(\sum_{t=1}^{m} A_t x_t\right) = \sum_{t=1}^{m} c_t^T x_t + G\left(\sum_{t=1}^{m} B_t x_t\right)$.%\begin{full}minimization of \eqref{General-lp-dual} with differentiable $G^*$.\end{full} 
The primal and dual updates in their algorithm are presented as a continuous update based on a differential equation. They assume that $G^*$ is differentiable and that $\nabla G^*$ is monotone on $\mathbb{R}_+^n$, i.e., If $y \geq y'$, then $\nabla G^*(y) \leq \nabla G^*(y')$. In contrast, our assumption written in terms of $G^*$ for a differentiable function will become: If $\nabla G^*(y) < \nabla G^*(y')$, then $y \geq y'$, which is not equivalent to the assumption in \cite{buchbinder2014online}. When $G$ is separable the two assumptions coincide and this algorithm is similar to algorithm \ref{algorithm2} applied to the smooth function $G_M$ whose conjugate is given by $G_{M}^*(y) = \frac{1}{\gamma}  \sum_{i=1}^{n} \int_{0}^{y_i} {G_i^*}'(z)  \log(1-z/\theta) \;dz $. This smoothing coincides with Nesterov smoothing in the case of LP.

%The condition that is requires on $G^*$ inThe smoothing method that we apply to 
\paragraph{Online learning.}
As mentioned before, the dual update in Algorithm \ref{algorithm1} is the same as in Follow-the-Regularized-Leader (FTRL) algorithm with $-\psi^*$ as the regularization. This primal dual perspective has been used in \cite{shalev2007primal} for design and analysis of online learning algorithms. In the online learning literature, the goal is to derive a bound on \emph{regret} that optimally depends on the horizon, $m$. The goal in the current paper is to provide competitive ratio for the algorithm that depends on the function $\psi$. Regret provides a bound on the duality gap, and in order to get a competitive ratio the regularization function should be crafted based on $\psi$. A general choice of regularization which yields an optimal regret bound in terms of $m$ is \emph{not} enough for a competitive ratio argument, therefore existing results in online learning do not address our aim. 
%A direction for future research is to further explore connections between the these two different types of online algorithms and their guarantees. 

\subsubsection*{Acknowledgments}
The authors would like to thank James Saunderson, Ting Kei Pong, Palma London, and Amin Jalali for their helpful comments and discussions.

\subsubsection*{References}
\bibliographystyle{plain}	
\bibliography{OnRef,Online}

\begin{thebibliography}{10}

\bibitem{abernethy2008competing}
Jacob Abernethy, Elad Hazan, and Alexander Rakhlin.
\newblock Competing in the dark: An efficient algorithm for bandit linear
  optimization.
\newblock In {\em COLT}, pages 263--274, 2008.

\bibitem{agrawal2014fast}
Shipra Agrawal and Nikhil~R Devanur.
\newblock Fast algorithms for online stochastic convex programming.
\newblock {\em arXiv preprint arXiv:1410.7596}, 2014.

\bibitem{agrawal2009dynamic}
Shipra Agrawal, Zizhuo Wang, and Yinyu Ye.
\newblock A dynamic near-optimal algorithm for online linear programming.
\newblock {\em arXiv preprint arXiv:0911.2974}, 2009.

\bibitem{azar2014online}
Yossi Azar, Ilan~Reuven Cohen, and Debmalya Panigrahi.
\newblock Online covering with convex objectives and applications.
\newblock {\em arXiv preprint arXiv:1412.3507}, 2014.

\bibitem{babaioff2008online}
Moshe Babaioff, Nicole Immorlica, David Kempe, and Robert Kleinberg.
\newblock Online auctions and generalized secretary problems.
\newblock {\em SIGecom Exch.}, 7(2):7:1--7:11, June 2008.

\bibitem{bertsekas1975necessary}
Dimitri~P Bertsekas.
\newblock Necessary and sufficient conditions for a penalty method to be exact.
\newblock {\em Mathematical programming}, 9(1):87--99, 1975.

\bibitem{buchbinder2014online}
Niv Buchbinder, Shahar Chen, Anupam Gupta, Viswanath Nagarajan, et~al.
\newblock Online packing and covering framework with convex objectives.
\newblock {\em arXiv preprint arXiv:1412.8347}, 2014.

\bibitem{buchbinder2007online}
Niv Buchbinder, Kamal Jain, and Joseph~Seffi Naor.
\newblock Online primal-dual algorithms for maximizing ad-auctions revenue.
\newblock In {\em Algorithms--ESA 2007}, pages 253--264. Springer, 2007.

\bibitem{buchbinder2009online}
Niv Buchbinder and Joseph Naor.
\newblock Online primal-dual algorithms for covering and packing.
\newblock {\em Mathematics of Operations Research}, 34(2):270--286, 2009.

\bibitem{burke1991exact}
James~V Burke.
\newblock An exact penalization viewpoint of constrained optimization.
\newblock {\em SIAM Journal on control and optimization}, 29(4):968--998, 1991.

\bibitem{chan2015online}
TH~Chan, Zhiyi Huang, and Ning Kang.
\newblock Online convex covering and packing problems.
\newblock {\em arXiv preprint arXiv:1502.01802}, 2015.

\bibitem{devanur2009Adwords}
Nikhil~R. Devanur and Thomas~P. Hayes.
\newblock The adwords problem: Online keyword matching with budgeted bidders
  under random permutations.
\newblock In {\em Proceedings of the 10th ACM Conference on Electronic
  Commerce}, EC '09, pages 71--78, New York, NY, USA, 2009. ACM.

\bibitem{devanur2012online}
Nikhil~R Devanur and Kamal Jain.
\newblock Online matching with concave returns.
\newblock In {\em Proceedings of the forty-fourth annual ACM symposium on
  Theory of computing}, pages 137--144. ACM, 2012.

\bibitem{devanur2011near}
Nikhil~R Devanur, Kamal Jain, Balasubramanian Sivan, and Christopher~A Wilkens.
\newblock Near optimal online algorithms and fast approximation algorithms for
  resource allocation problems.
\newblock In {\em Proceedings of the 12th ACM conference on Electronic
  commerce}, pages 29--38. ACM, 2011.

\bibitem{eghbali2014exponentiated}
Reza Eghbali, Jon Swenson, and Maryam Fazel.
\newblock Exponentiated subgradient algorithm for online optimization under the
  random permutation model.
\newblock {\em arXiv preprint arXiv:1410.7171}, 2014.

\bibitem{feldman2010online}
Jon Feldman, Monika Henzinger, Nitish Korula, Vahab~S. Mirrokni, and Cliff
  Stein.
\newblock Online stochastic packing applied to display ad allocation.
\newblock In {\em Proceedings of the 18th Annual European Conference on
  Algorithms: Part I}, ESA'10, pages 182--194, Berlin, Heidelberg, 2010.
  Springer-Verlag.

\bibitem{feldman2009online2}
Jon Feldman, Aranyak Mehta, Vahab Mirrokni, and S~Muthukrishnan.
\newblock Online stochastic matching: Beating 1-1/e.
\newblock In {\em Foundations of Computer Science, 2009. FOCS'09. 50th Annual
  IEEE Symposium on}, pages 117--126. IEEE, 2009.

\bibitem{frank1956algorithm}
Marguerite Frank and Philip Wolfe.
\newblock An algorithm for quadratic programming.
\newblock {\em Naval research logistics quarterly}, 3(1-2):95--110, 1956.

\bibitem{gupta2014experts}
Anupam Gupta and Marco Molinaro.
\newblock How the experts algorithm can help solve lps online.
\newblock {\em arXiv preprint arXiv:1407.5298}, 2014.

\bibitem{jaillet2012near}
Patrick Jaillet and Xin Lu.
\newblock Near-optimal online algorithms for dynamic resource allocation
  problems.
\newblock {\em arXiv preprint arXiv:1208.2596}, 2012.

\bibitem{jaillet2013online}
Patrick Jaillet and Xin Lu.
\newblock Online stochastic matching: New algorithms with better bounds.
\newblock {\em Mathematics of Operations Research}, 2013.

\bibitem{Joshi2009}
Siddharth Joshi and Stephen Boyd.
\newblock {Sensor Selection via Convex Optimization}.
\newblock {\em IEEE TRANSACTIONS ON SIGNAL PROCESSING}, 57(2), 2009.

\bibitem{kalyanasundaram2000optimal}
Bala Kalyanasundaram and Kirk~R Pruhs.
\newblock An optimal deterministic algorithm for online b-matching.
\newblock {\em Theoretical Computer Science}, 233(1):319--325, 2000.

\bibitem{karande2011online}
Chinmay Karande, Aranyak Mehta, and Pushkar Tripathi.
\newblock Online bipartite matching with unknown distributions.
\newblock In {\em Proceedings of the forty-third annual ACM symposium on Theory
  of computing}, pages 587--596. ACM, 2011.

\bibitem{karp1990optimal}
Richard~M Karp, Umesh~V Vazirani, and Vijay~V Vazirani.
\newblock An optimal algorithm for on-line bipartite matching.
\newblock In {\em Proceedings of the twenty-second annual ACM symposium on
  Theory of computing}, pages 352--358. ACM, 1990.

\bibitem{kesselheimRTV13}
Thomas Kesselheim, Klaus Radke, Andreas T{\"o}nnis, and Berthold V{\"o}cking.
\newblock Primal beats dual on online packing lps in the random-order model.
\newblock In {\em Proceedings of the 46th Annual ACM Symposium on Theory of
  Computing}, STOC '14, pages 303--312, New York, NY, USA, 2014. ACM.

\bibitem{kleinberg2005multiple}
Robert Kleinberg.
\newblock A multiple-choice secretary algorithm with applications to online
  auctions.
\newblock In {\em Proceedings of the sixteenth annual ACM-SIAM symposium on
  Discrete algorithms}, pages 630--631. Society for Industrial and Applied
  Mathematics, 2005.

\bibitem{lorentz1953inequality}
GG~Lorentz.
\newblock An inequality for rearrangements.
\newblock {\em The American Mathematical Monthly}, 60(3):176--179, 1953.

\bibitem{mahdian2011online}
Mohammad Mahdian and Qiqi Yan.
\newblock Online bipartite matching with random arrivals: an approach based on
  strongly factor-revealing lps.
\newblock In {\em Proceedings of the forty-third annual ACM symposium on Theory
  of computing}, pages 597--606. ACM, 2011.

\bibitem{manshadi2012online}
Vahideh~H Manshadi, Shayan~Oveis Gharan, and Amin Saberi.
\newblock Online stochastic matching: Online actions based on offline
  statistics.
\newblock {\em Mathematics of Operations Research}, 37(4):559--573, 2012.

\bibitem{mehta2007Adwords}
Aranyak Mehta, Amin Saberi, Umesh Vazirani, and Vijay Vazirani.
\newblock Adwords and generalized online matching.
\newblock {\em Journal of the ACM (JACM)}, 54(5):22, 2007.

\bibitem{nesterov2005smooth}
Yu~Nesterov.
\newblock Smooth minimization of non-smooth functions.
\newblock {\em Mathematical programming}, 103(1):127--152, 2005.

\bibitem{nesterov2004introductory}
Yurii Nesterov.
\newblock {\em Introductory lectures on convex optimization}, volume~87.
\newblock Springer Science \& Business Media, 2004.

\bibitem{nesterov2009primal}
Yurii Nesterov.
\newblock Primal-dual subgradient methods for convex problems.
\newblock {\em Mathematical programming}, 120(1):221--259, 2009.

\bibitem{pukelsheim1993optimal}
Friedrich Pukelsheim.
\newblock {\em Optimal design of experiments}, volume~50.
\newblock siam, 1993.

\bibitem{rockafellar1998variational}
R~Tyrrell Rockafellar, Roger J-B Wets, and Maria Wets.
\newblock {\em Variational analysis}, volume 317.
\newblock Springer, 1998.

\bibitem{shalev2011online}
Shai Shalev-Shwartz.
\newblock Online learning and online convex optimization.
\newblock {\em Foundations and Trends in Machine Learning}, 4(2):107--194,
  2011.

\bibitem{shalev2007online}
Shai Shalev-Shwartz and Yoram Singer.
\newblock Online learning: Theory, algorithms, and applications.
\newblock 2007.

\bibitem{shalev2007primal}
Shai Shalev-Shwartz and Yoram Singer.
\newblock A primal-dual perspective of online learning algorithms.
\newblock {\em Machine Learning}, 69(2-3):115--142, 2007.

\bibitem{Shamaiah2010}
Manohar Shamaiah, Siddhartha Banerjee, and Haris Vikalo.
\newblock {Greedy sensor selection: Leveraging submodularity}.
\newblock In {\em 49th IEEE Conference on Decision and Control (CDC)}, pages
  2572--2577. IEEE, dec 2010.

\bibitem{zavlanos2008distributed}
Michael~M Zavlanos and George~J Pappas.
\newblock Distributed connectivity control of mobile networks.
\newblock {\em Robotics, IEEE Transactions on}, 24(6):1416--1428, 2008.

\end{thebibliography}

\begin{full}
%------------------------------------------------------------------------------------------------------------------------------------------------%
%                                         																					    %
%                                                                  Appendix                                        								    %
%																											    %
%------------------------------------------------------------------------------------------------------------------------------------------------ %
\appendix
\section*{Appendix}

\renewcommand{\thesubsection}{\Alph{subsection}}

\subsection{Proofs}\label{proofs}
\textbf{Proof of Lemma \ref{duality-gap}:} Using the definition of $D_{\rm sim}$, we can write: 

\begin{align*}
D_{\rm sim} &=  \sum_{t=1}^{m} \sigma_t(A_t^T \tilde{y}_t) - \psi^{*}(\tilde{y}_{m+1})\\
&=  \sum_{t=1}^{m} \inner{ A_t \tilde{x}_t}{\tilde{y}_t} - \psi^{*}(\tilde{y}_{m+1})\\
&\leq  \sum_{t=1}^{m} (\psi(\sum_{s=1}^{t}A_s \tilde{x}_s) - \psi(\sum_{s=1}^{t-1}A_s \tilde{x}_s) ) - \psi^{*}(\tilde{y}_{m+1})\\
&= \psi(\sum_{s=1}^{m}A_s \tilde{x}_s) -\psi(0) - \psi^{*}(\tilde{y}_{m+1}) ,
\end{align*}

\noindent where in the inequality follows from concavity of $\psi$, and the last line results from the sum telescoping. Similarly, we can bound $D_{\rm seq}$:
\begin{align}
D_{\rm seq} &=  \sum_{t=1}^{m} \sigma_t(A_t^T \hat{y}_t) - \psi^{*}(\hat{y}_{m+1})\\ \notag
& = \sum_{t=1}^{m} \inner{A_t \hat{x}_t}{\hat{y}_t} - \psi^*(\hat{y}_{m+1})\\ \notag
&=  \sum_{t=1}^{m} \inner{ A_t \hat{x}_t}{ -\hat{y}_{t+1}+\hat{y}_t} +\sum_{t=1}^{m} \inner{ A_t \hat{x}_t}{ \hat{y}_{t+1}}- \psi^{*}(\hat{y}_{m+1})\\ \notag
&\leq \sum_{t=1}^{m} \inner{ A_t \hat{x}_t}{ -\hat{y}_{t+1}+\hat{y}_t}  + \sum_{t=1}^{m} (\psi(\sum_{s=1}^{t}A_s \hat{x}_s) - \psi(\sum_{s=1}^{t-1}A_s \hat{x}_s) ) - \psi^{*}(\hat{y}_{m+1})\\ \notag
&= \sum_{t=1}^{m} \inner{ A_t \hat{x}_t}{ -\hat{y}_{t+1}+\hat{y}_t}  +\psi(\sum_{s=1}^{m}A_s \hat{x}_s) - \psi(0)- \psi^{*}(\hat{y}_{m+1}) .
\end{align}

To provide intuition about the above inequalities we have plotted the derivative of a concave function $\psi$ defined on $\mathbf{R}^+$ in Figure \ref{Riemann}. The quantity $ \sum_{t=1}^{m} \inner{ A_t \tilde{x}_t}{\tilde{y}_t}$ is a right Riemann sum approximation of the integral of $\psi'$ and lower bounds the integral (Figure \ref{right-Riemann}). The quantity $ \sum_{t=1}^{m} \inner{ A_t \hat{x}_t}{\hat{y}_t}$ is a left Riemann sum approximation of the integral and upper bounds the integral (Figure \ref{left-Riemann}). The area of hatched rectangle bounds the error of the left Riemann sum and is equal to $\inner{ A_t \hat{x}_t}{ -\hat{y}_{t+1}+\hat{y}_t}$.

%----
\def\ta{1.87}
\def\tb{1.13}
\def\tc{.39}
%----
\def\ra{2.7187}
\def\rb{1.62}
\def\rc{.95}
%----
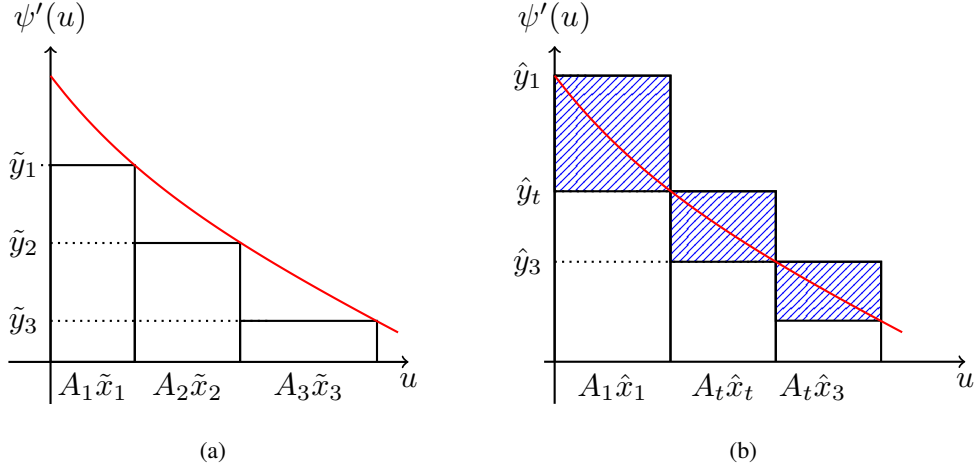
\begin{figure}
\begin{subfigure}{.5\textwidth}
\begin{center}
\begin{tikzpicture}[thick,scale=1.4, every node/.style={scale=1.2}]
  \draw[->] (-0.2,0) -- (3.6,0)node[below] {$u$};
  \draw[->] (0.2,-.4) -- (0.2,3) node[above] {$\psi'(u)$};
  \draw (0.2,0) rectangle (1,\ta) ;
  \draw (.6,0) node[below]{$A_{1} \tilde{x}_{1}$};
  \draw (.2,\ta) node[left]{$\tilde{y}_{1}$};
  \draw[dotted] (0.2,\ta) -- (.1,\ta);
  \draw (1,0) -- (1,\tb) -- (2,\tb) -- (2,0);
  \draw (1.5,0) node[below]{$A_{2} \tilde{x}_{2}$};
  \draw (.2,\tb) node[left]{$\tilde{y}_{2}$};
  \draw[dotted] (0.2,\tb) -- (1,\tb);
  \draw (2,0) -- (2,\tc) -- (3.3,\tc) -- (3.3,0);
  \draw (2.65,0) node[below]{$A_{3} \tilde{x}_{3}$};
  \draw (.2,\tc) node[left]{$\tilde{y}_{3}$};
  \draw[dotted] (0.2,\tc) -- (2.3,\tc);
  \draw[scale=1,domain=0.2:3.5,smooth,variable=\x,red] plot ({\x},{exp(-\x) -.5*\x+2});
\end{tikzpicture}
\end{center}
\caption{}
\label{right-Riemann}
\end{subfigure}
\begin{subfigure}{.5\textwidth}
\begin{center}
\begin{tikzpicture}[thick,scale=1.4, every node/.style={scale=1.2}]
  \draw[->] (.1,0) -- (4.1,0)node[below] {$u$};
  \draw[->] (0.2,-.4) -- (0.2,3) node[above] {$\psi'(u)$};
  \draw (.2,0) -- (.2,\ra) -- (1.3,\ra) -- (1.3,0);
  \draw (.75,0) node[below]{$A_{1} \hat{x}_{1}$};
  \draw (.2,\ra) node[left]{$\hat{y}_{1}$};
  \draw[dotted] (.2,\ra) -- (.5,\ra);
  \draw (1.3,0) -- (1.3,\rb) -- (2.3,\rb) -- (2.3,0);
  \draw (1.8,0) node[below]{$A_{t} \hat{x}_{t}$};
  \draw (0.2,\rb) node[left]{$\hat{y}_{t}$};
  \draw[dotted] (0.2,\rb) -- (1.5,\rb);
  \draw (2.3,0) -- (2.3,\rc) -- (3.3,\rc) -- (3.3,0);
  \draw (2.65,0) node[below]{$A_{t} \hat{x}_{3}$};
  \draw (0.2,\rc) node[left]{$\hat{y}_{3}$};
  \draw[dotted] (0.2,\rc) -- (2.3,\rc);
  \draw[pattern = north east lines, pattern color = blue](.2,\rb) rectangle (1.3,\ra);
  \draw[pattern = north east lines, pattern color = blue](1.3,\rc) rectangle (2.3,\rb);
  \draw[pattern = north east lines, pattern color = blue](2.3,.39) rectangle (3.3,\rc);
  \draw[scale=1,domain=0.2:3.5,smooth,variable=\x,red] plot ({\x},{exp(-\x) -.5*\x+2});
\end{tikzpicture}
\end{center}
\caption{}
\label{left-Riemann}
\end{subfigure}
\caption{Interpretation of $\sum_{t=1}^{m} \sigma_t(A_t^T \tilde{y}_t)$ and $\sum_{t=1}^{m} \sigma_t(A_t^T \hat{y}_t)$ as Riemann sums.}
\label{Riemann}
\end{figure}
When $\psi$ is differentiable with Lipschitz gradient, we can use the following inequality that is equivalent to Lipschitz continuity of the gradient:

\begin{align*}
\psi(u') \geq \psi(u) + \inner{\nabla \psi(u)}{u'-u} - \frac{1}{2\mu} \norm{u-u'}^2   \quad u,u' \in K
\end{align*}
\noindent (see, for example, \cite[section 2.1.1]{nesterov2004introductory}) to get
\begin{align}	D_{\rm seq}& =  \sum_{t=1}^{m} \sigma_t(A_t^T \hat{y}_t) - \psi^{*}(\hat{y}_{m+1})\\ \notag
& = \sum_{t=1}^{m} \inner{A_t \hat{x}_t}{\hat{y}_t} - \psi^*(\hat{y}_{m+1})\\ \notag
	&\leq \sum_{t=1}^{m} \frac{1}{2\mu} \norm{A_t \hat{x}_t}^2  + \sum_{t=1}^{m} (\psi(\sum_{s=1}^{t}A_s \hat{x}_s) - \psi(\sum_{s=1}^{t-1}A_s \hat{x}_s) ) - \psi^{*}(\hat{y}_{m+1})\\ \notag
	&= \sum_{t=1}^{m} \frac{1}{2\mu} \norm{A_t \hat{x}_t}^2 +\psi(\sum_{s=1}^{m}A_s \hat{x}_s) - \psi(0)- \psi^{*}(\hat{y}_{m+1}) .
\end{align}
\qed

\textbf{Proof of Lemma 2:}
Let $(y,\beta)$ be a feasible solution for problem \eqref{cont-problem}. Note that $y \geq 0$ since $\dom \psi^* \subset \mathbb{R}_{+}$ by the fact that $\psi$ is non-decreasing. Let $\bar{y}(u) = \inf_{s \leq u} y(s)$. Note that $\bar{y}$ is continuous.
% and hence a feasible solution for problem \eqref{cont-problem}. 
Define 
\begin{align*} \beta(u) = \frac{\int_{s=0}^{u} y(s)\; ds  - \psi^*(y(u))}{\psi(u)}, \qquad  \bar{\beta}(u) = \frac{\int_{s=0}^{u} \bar{y}(s)\; ds  - \psi^*(\bar{y}(u))}{\psi(u)},\end{align*}
with the definition modified with the right limit at $u =0$. For any $u$ such that $\bar{y}(u) = y(u)$, we have:
\begin{align*}
\beta(u) = \frac{\int_{s=0}^{u} y(s)\; ds - \psi^*(y(u))}{\psi(u)} \geq \frac{\int_{s=0}^{u} \bar{y}(s) \; ds- \psi^*(\bar{y}(u))}{\psi(u)}  = \bar{\beta}(u).
\end{align*}

Now, we consider the set $\{u \; | \; \bar{y}(u) \neq y(u)\}$. By the definition of $\bar{y}$, we have $\bar{y}(0) = y(0)$. Since both functions are continuous, the set $\{u \; | \; \bar{y}(u) \neq y(u)\}$ is an open subset of $\mathbb{R}$ and hence can be written as a countable union of disjoint open intervals. Specifically, we can define the end points of the intervals as:
\begin{align*}
      &a_0 = b_0 = 0,\\
      &a_{i} = \inf\{u > b_{i-1}\;|\; y(u)> \bar{y}(u)\}, \quad \forall i \in\{1,2,\ldots\}\\
	&b_{i} = \inf\{u > a_{i} \;|\; y(u) = \bar{y}(u)\}, \quad \forall i \in\{1,2,\ldots\}.
\end{align*}
 then $\{u \; | \; \bar{y}(u) \neq y(u)\} = \bigcup_{i \in\{1,2,\ldots\}} (a_i,\; b_i).$(See Figure \ref{alpha-figure-ybar})

\begin{figure}
\begin{center}
\begin{tikzpicture}[thick,scale=1.4, every node/.style={scale=1.2}]
  \draw[line width=1.5 pt,->] (-0.6,.5) -- (4,0.5)node[below] {$u$};
  \draw[line width=1.5 pt,->] (-.4,0) -- (-.4,3) node[above] {};
  \draw[line width=.75 pt,dotted] (2,0.5) -- (2,2);
  \draw[line width=.75 pt,dotted] (0,0.5) -- (0,2);
  \draw[line width=.75 pt,dotted] (2.5,0.5) -- (2.5,1.5);
  \draw[line width=.75 pt,dotted] (3.5,0.5) -- (3.5,1.5);
  
  %\draw[dotted] (0,1.36) -- (2,1.36);
  \draw[line width=1.5 pt,scale=1,domain=-.4:0,smooth,variable=\x,blue] plot ({\x},{2});
 \draw[line width=1.5 pt,scale=1,domain=-0.4:0,dashed,thick,variable=\x,red] plot ({\x},{2});
  \draw[line width=1.5 pt,scale=1,domain=0:1,smooth,variable=\x,blue] plot ({\x},{\x +2});
 \draw[line width=1.5 pt,scale=1,domain=0:2,dashed,thick,variable=\x,red] plot ({\x},{2});
  \draw[line width=1.5 pt,scale=1,domain=1:2.5,smooth,variable=\x,blue] plot ({\x},{4-\x });
  \draw[line width=1.5 pt,scale=1,domain=2:2.5,dashed,thick,variable=\x,red] plot ({\x},{4-\x });
  \draw[line width=1.5 pt,scale=1,domain=2.5:3,smooth,variable=\x,blue] plot ({\x},{\x -1});
  \draw[line width=1.5 pt,scale=1,domain=2.5:3.5,dashed,thick,variable=\x,red] plot ({\x},{1.5 });
 \draw[line width=1.5 pt,scale=1,domain=3:3.75,smooth,variable=\x,blue] plot ({\x},{-\x +5});
 \draw (0,.5) node[below]{$a_1$};
 \draw (2,.5) node[below]{$b_1$};
 \draw (2.5,.5) node[below]{$a_2$};
 \draw (3.5,.5) node[below]{$b_2$};
  %\draw[scale=1,domain=0:3.4,smooth,variable=\x] plot ({\x},{(0.25+exp(-2))*\x+1-3*exp(-2)});
   % \draw (0,.59) node[left]{$-\psi^*(\psi'(u_0))$};
   % \draw (0,1.36) node[left]{$\psi(u_0)$};
    %\draw (2,0) node[below]{$u_0$};
    %\draw (4,0) node[below]{$u$};
   % \draw (0,2.3) node[left]{$\psi(u)$};
\end{tikzpicture}
\end{center}
\caption{ An example of $y(u)$ (solid blue) and $\bar{y}(u)$ (dashed red).}\label{alpha-figure-ybar}
\end{figure}
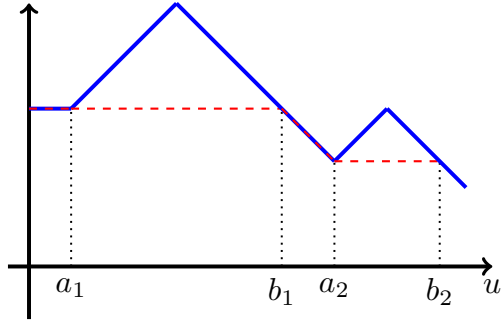 
For any $i \in \{1,2,\ldots\}$, we show that $\beta(u) \geq \bar{\beta}(u)$ on $(a_i,\; b_i)$. If $a_i=\infty$, then $b_i = \infty$, so we assume that $a_i < \infty$. By the definition of $a_i$ and $b_i$, $\bar{y}(u)$ is constant on $(a_i, \; b_i)$. Also, we have $y(a_i) = \bar{y}(a_i)$. Similarly, we have $y(b_i) = \bar{y}(b_i)$ whenever $b_i < \infty$. % On the other hand, we have $y_t > \bar{y}(u)$ for all $t \in (a_i, \; b_i)$ and  Define 

Since $\bar{y}(u) \leq y(u)$ for all $u$ and $y(a_i) = \bar{y}(a_i)$, we have
\begin{align}\label{beta1}
\beta(a_i) \geq \bar{\beta}(a_i).
\end{align}
If $b_i < \infty$, similarly by the fact that $y(b_i) = \bar{y}(a_i) = \bar{y}(b_i)$, we have
\begin{align}\label{beta2}
\beta(b_i) \geq \bar{\beta}(b_i).
\end{align}

Now we consider the case where $b_i = \infty$. In this case we have $\bar{y}(u) = \bar{y}(a_i)$ on $(a_i,\; \infty)$. We consider two cases based on the asymptotic behavior of $\psi$. If $\lim_{u \to \infty} \psi(u) = +\infty$ ($\psi$ is unbounded), then we have
\begin{align}\label{beta3}
\limsup_{u \to \infty}\beta(u) = \limsup_{u \to \infty} \frac{\int_{s=0}^{u} y(s)\; ds}{\psi(u)} \geq \limsup_{u \to \infty} \frac{\int_{s=0}^{u} \bar{y}(a_i)\; ds}{\psi(u)} = \lim_{u \to \infty}\bar{\beta}(u) .
\end{align}
Here we used the fact that $-\psi^*(y(u))$ is bounded. This follows from the fact $\psi^*$ is monotone thus:
\begin{align*}
-\psi^*(y(u)) \leq  -\psi^*(\bar{y}(a_i)),
\end{align*}
and $-\psi^*(\bar{y}(a_i)) < \infty$ because if $-\psi^*(\bar{y}(a_i)) = \infty$, then $\beta({a_i}) \geq \bar{\beta}({a_i}) = \infty$ which contradicts the feasibility of $(y,\beta)$. 

Now consider the case when $\lim_{u \to \infty} \psi(u) = M$ for some positive constant $M$. In this case, $- \psi^* \leq M$. We claim that $y(a_i) = 0$ and $\liminf_{t \to \infty} y(u) = 0$. Suppose $\liminf_{u \to \infty} y(u) > 0$, then $\limsup_{u \to \infty} \beta(u) = \infty$  since the numerator in the definition of $\beta$ tends to infinity while the denominator is bounded. But this contradicts feasibility of $(y,\beta)$. On the other hand, by the definition of $a_i$ and $b_i$ we should have $y(a_i) = \bar{y}(a_i) \leq \liminf_{u \to \infty} y(u)$. Combining this with the fact that $\bar{y}(a_i) \in \dom \psi^* \subset \mathbb{R}_{+}$, we conclude that $y(a_i) = 0$. Using that $y(a_i) = 0$ and $\liminf_{u \to \infty} y(u) = 0$, we get:
\begin{align}\notag
\limsup_{u \to \infty}\beta(u) &= \limsup_{u \to \infty} \frac{\int_{s=0}^{u} y(s)\; ds - \psi^*(y(u))}{\psi(u)} \\\notag
&\geq \lim_{u \to \infty} \frac{\int_{s=0}^{u} y(s)\; ds - \psi^*(0)}{M} \\\label{beta4}
&\geq \frac{\int_{s=0}^{a_i} \bar{y}(s)\; ds - \psi^*(0)}{M} = \lim_{u \to \infty}\bar{\beta}(u) ,
\end{align}

where in the last inequality we used the fact that $\bar{y}(u) = 0$ for $u \geq a_i$. 

Let $\psi'$ be the right derivative of $\psi$. Since $\psi$ is concave, $\psi'$ is non-increasing. Therefore, the interval $(a_i,b_i)$ can be written as $(a_i,u'] \cup [u',b_i)$ such that $\psi'(u) \geq \bar{y}(a_i)$ on $(a_i,u']$ and $\psi'(u) \leq \bar{y}(a_i)$ on $[u',b_i)$. Since $\psi'(u) \geq \bar{y}(a_i)$ on $(a_i,u']$ we have:
\begin{align*}
\int_{a_i}^{u} \bar{y}(s) \; d s &= \int_{a_i}^{u} \bar{y}(a_i) \; d s\\ 
&\leq \int_{a_i}^{u} \psi'(s) \; d s = \psi(u) - \psi(a_i),
\end{align*}
for all $u \in (a_i,u']$. This yields:
\begin{align*}
\bar{\beta}(a_i) &= \frac{\int_{s=0}^{a_i} \bar{y}(s)\; ds  - \psi^*(\bar{y}(a_i))}{\psi(u)}\\
&\geq \frac{\int_{s=0}^{a_i} \bar{y}(s)\; ds + \int_{a_i}^{u} \bar{y}(s) \; ds  - \psi^*(\bar{y}(a_i))}{\psi(a_i) + \psi(u) - \psi(a_i)}= \bar{\beta}(u).
\end{align*}
for all $u \in (a_i,u']$. Here we used the fact that if $c_1 \geq c_2 > 0$ and $d_2 \geq d_1 \geq 0$, then $$\frac{c_1}{c_2} \geq \frac{c_1 + d_1}{c_2+d_2}.$$ Similarly, we have $\bar{\beta}(b_i) \geq \bar{\beta}(u)$ for any $u \in  [u',b_i)$. Combining this with \eqref{beta1},\eqref{beta2},\eqref{beta3}, and \eqref{beta4}, we get:
\begin{align*}
\sup_{a_i \leq u \leq b_i} \bar{\beta}(u) &= \max(\bar{\beta}(a_i), \bar{\beta}(b_i))\\
 &\leq \max(\beta(a_i),\beta(b_i)) \leq \sup_{a_i \leq u \leq b_i} {\beta}(u).
\end{align*}

We conclude that $\bar{\beta}(u) \leq \beta(u)$ for all $t \geq 0$ hence $(\bar{y}, \beta)$ is a feasible solution for the problem.

%%--- Thm for existance of solution
%   % Using the previous lemma we can prove the following theorem.
%  {\color{red}
%\begin{thm}
%There exists a non-increasing optimal solution $y$ for problem \eqref{cont-problem}. Hence the corresponding $\psi_M$ is concave.
%%, then $y$ is non-increasing hence $\psi_M$ is concave.
%\end{thm}
%
%\begin{proof}
%Let $\beta^*$ be the optimal value of the problem. Then there exists a feasible sequence $(y^{(i)}, \beta^{(i)})$ such that  
%\begin{align*}
%\lim_{i \to \infty}\beta^{(i)} = \beta^*.
%\end{align*}
%Define $\bar{y}^{(i)}(t) =  \inf_{s \leq t} y(s)^{(i)}$. Since $\bar{y}^{(i)}$ is non-decreasing for all $i$, by Helly's selection theorem, it compactly converges to a continuous non-decreasing function $y$. For any $t \geq 0$, we have:
%
%\begin{align*}
%\int_{s=0}^{t} y(s)\; ds - \psi^*(\bar{y}(t)) = \lim_{i \to \infty} \int_{s=0}^{t} y^{(i)}_{s}\; ds - \psi^*(\bar{y}(t)^{(i)}) \leq \lim_{i \to \infty}\beta^{(i)} \psi(t) = \beta^* \psi(t),
%\end{align*}
%where the first equality follows from upper semi-continuity of $\psi^*$. Therefore, $(y , \beta^*)$  is a feasible hence an optimal solution for problem \eqref{cont-problem}.  
%\end{proof}}

\textbf{Proof of Theorem 2:}
Let $(y,\beta)$ be a feasible solution for problem \eqref{cont-problem}. By Lemma \ref{non-increasing-solution}, we can assume that $y$ is non-increasing. First, note that $y \geq 0$ since $\dom \psi^* = [0,\; \infty)$. Define $\bar{y}(u) = y(u)$ for $u \leq u'$ and $\bar{y}(u) = 0$ for $u > u'$.  We show that $(\bar{y},\beta)$ is also a feasible solution for \eqref{cont-problem} modulo the continuity condition. Define 
\begin{align*} \beta(u) = \frac{\int_{s=0}^{u} y(s)\; ds  - \psi^*(y(u))}{\psi(u)}, \qquad  \bar{\beta}(u) = \frac{\int_{s=0}^{u} \bar{y}(s)\; ds  - \psi^*(\bar{y}(u))}{\psi(u)}.\end{align*}

By the definition of $\bar{y}$, for all $u$, we have:
\begin{align}\label{y-bar-y}
 \int_{0}^{u} y(s)\; d s \geq \int_{0}^{u} \bar{y}(s)\; d s ,
\end{align}
and $\beta(u) = \bar{\beta}(u)$ for $u \in [0,\;u']$. 
Since $y(u)$ is non-increasing and $y(u) \geq 0$, $\lim_{u \rightarrow \infty} y(u)$ exists. We claim that $\lim_{u\rightarrow \infty} y(u) = 0$. To see this note that if $\lim_{u\rightarrow \infty} y(u) > 0$, then 
% that $\psi(u)=\psi(u')$ for $t \geq u'$. Now
\begin{align*}
\lim_{u \to \infty}\int_{s=0}^{u} y(s)\; d s = \infty,
\end{align*}
which contradicts the fact that $\beta(u) \leq \beta$ for all $u$. For all $u \geq u'$, now we have:
\begin{align*}
\sup_{u \geq u'} {\beta}(u) \geq \lim_{u \to \infty} \beta(u) &= \frac{\lim_{u \to \infty} \int_{s=0}^{u} y(s)\; ds  - \psi^*(0)}{\psi(u')} \\
&\geq  \frac{\lim_{u \to \infty} \int_{s=0}^{u} \bar{y}(s)\; ds  - \psi^*(0)}{\psi(u')} = \bar{\beta}(u') ,
\end{align*}
where the equality follows from the fact that $\lim_{u\rightarrow \infty} y(u) = 0$, and in the last inequality, we used \eqref{y-bar-y}. Since $\bar{y}(u) = 0$ for $u > u'$, $\beta(u)$ is constant on $[u'\, \; \infty)$. Therefore, $\sup_{u \geq u'} \bar{\beta}(u) = \bar{\beta}(u')$. Combining this with the previous inequality we get:
\begin{align*}
\sup_{ u\geq u'} {\beta}(u) &\geq  \sup_{u \geq u'} \bar{\beta}(u').
\end{align*}
%
%
%But this contradicts the fact that $y$ is an optimal solution. Now, we can combine the fact that $\lim_{t\rightarrow \infty} y(u) = 0$ with \eqref{y-bar-y} to conclude that
%
%\begin{align*}
%\sup_{t\geq u'} \bar{\beta}(u)  =  \frac{\int_{s=1}^{u'} y(s)  - \psi^*(c)}{\psi(h u')} \leq \lim_{t \rightarrow \infty} \frac{h \int_{0}^{t} y(s)  - \psi^*(y(u))}{\psi( t)} \leq \sup_{t\geq u'} {\beta}(u)
%\end{align*}
%
Therefore, we conclude that $\bar{\beta}(u) \leq \beta$ for all $u$. Thus $(\bar{y}, \beta)$ is also a feasible solution for \eqref{cont-problem} modulo the continuity condition. Note that $\bar{y}(u)$ may not be continuous at $u'$. However, we can find a sequence of continuous functions $z^{(j)}$ that converge pointwise to $y$ and $z^{(i)}(u) = 0$ for all $i$ and $u \geq u'$. To do so we consider a sequence of real number $\epsilon_i \to 0$. We define $z^{(i)}(u) = \bar{y}(u)$ for $u \in [0, \; u'-\epsilon_i) \cup [u', \; \infty)$. On $[u' - \epsilon_{i}, \; u']$ we define $z^{(i)}(u)$ to be a linear function that take values $y(u'-\epsilon)$ and $0$ on the endpoints. Define
\begin{align*} \beta_{z^{(i)}} =\sup_{u>0} \frac{\int_{s=0}^{u} z^{(i)}(s)\; ds  - \psi^*(z^{(i)}(u))}{\psi(u)}.\end{align*}
By upper semi-continuity of $\psi^*$, $\beta_{z^{(i)}}$ converges to $\bar{\beta}$. 

Let $\beta^*$ be the optimal solution for problem \eqref{cont-problem}. By the definition, there exits a feasible sequence $(y^{(j)}, \beta^{(j)})$ such that $\beta^{(j)}$ converges to $\beta^*$. Let $\bar{y}^{(j)}(u) = y^{(j)}(u)$ for $t \leq u'$ and $\bar{y}^{(j)}(u) = 0$ for $t > u'$. Note that $\bar{y}^{(j)}(u)$ may not be continuous at $u'$. However, we can find a sequence of continuous functions $(z^{(j i)}, \;\beta_{z^{(j i)}})$ as in above.  Now $\beta_{z^{(j j)}}$ converges to $\beta^*$.

\qed

\subsection{Distance from $l_p$ norm ball}\label{lin-lp} 
In this section we prove that the function:
\begin{align*}
G(u) = - d_1(u,\mathcal{B}_p)
\end{align*}
\noindent satisfies Assumption \ref{antitone} and find a lower bound on $\bar{\alpha}_{\psi}$ when $\psi$ is given by \ref{General-lp} with $G(u) = - l d_1(u,\mathcal{B}_{p})$.

%Since $\partial \psi(u) =  - \partial d_1(u,\mathcal{B}_p)$, it is enough to show that 

%\begin{align*}
%u \geq v  \Rightarrow \partial d_1(u,C_2) \geq \partial d_1(v,C_2) . 
%\end{align*}

For any $u \in \mathbb{R}_{+}^n$, there exists $\bar{u} \in \mathcal{B}_p$ such that $d_1(u,\mathcal{B}_p) = \norml{u-\bar{u}}$. the subdifferential of distance function is\footnote{For convex function we use $\partial$ to denote subdifferential.}:
\begin{align*}
\partial d_1(u,\mathcal{B}_p) = \partial \norml{u - \bar{u}} \cap N_{\mathcal{B}_p}(\bar{u}),
\end{align*}
\noindent where $N_{\mathcal{B}_p}(u) = \{\xi \;|\; \inner{\xi}{v - u } \geq 0, \; \forall v \in \mathcal{B}_p\}$ is the normal cone of $\mathcal{B}_p$ at $u$. In fact $d_1(u,\mathcal{B}_p) = \norml{u-\bar{u}}$ if and only if $\partial \norml{u - \bar{u}} \cap N_{\mathcal{B}_p}(\bar{u}) \neq \varnothing$. When $u \in {\rm int} \mathcal{B}_p$, $\bar{u} = u$ and $\partial d_1(u, \mathcal{B}_p) = \{0\}$. In order to find $\partial d_1(u,\mathcal{B}_p)$ when $u \notin {\rm int} \mathcal{B}_p$, we first find $\bar{u}$ in this case. For any $r \geq 0$, define $u \wedge r \in \mathbb{R}_{+}^n$ to be:
\begin{align*}
(u \wedge r)_i = \min(u_i,r)   \quad \forall i.
\end{align*}
Note that $\norm{u \wedge 0}_p = 0$  and $\norm{u \wedge (\max_{i} u_i)}_p = \norm{u}_p \geq 1$. Since $\norm{u \wedge r}_p$ is a continuous function of $r$, by the intermediate value theorem, there exists $r_u \in (0, \max_i u_i]$ such that $\norm{u \wedge r_u}_p = 1$. Now $\bar{u} = u \wedge r$. To see this note that:

 \begin{align}\label{sub1}
&\partial \norml{u -\bar{u}} \cap N_{\mathcal{B}_p}(\bar{u}) = \left\{\frac{1}{r_u^{p-1}} (u \wedge r_u)^{\circ (p-1)}\right\}  \quad \mbox{for} \quad r_u < \max_i u_i;\\ \label{sub2}
&\partial \norml{u -\bar{u}} \cap N_{\mathcal{B}_p}(\bar{u}) = \left\{\frac{z}{r_u^{p-1}}(u \wedge r_u)^{\circ (p-1)} \;|\; 0 \leq z \leq 1\right\} \quad \mbox{for} \quad r_u = \max_i u_i;
 \end{align} 
 
\noindent where $^{\circ (p-1)}$ denotes element-wise exponentiation. Now if $u' \leq u$, then $r_{u} \leq r_u'$ since $\norm{u \wedge r}_p \geq  \norm{u' \wedge r}_p$ for all $r$. Thus by \eqref{sub1} and \eqref{sub2}, there exists $y \in \partial d_1(u, \mathcal{B}_p)$ such that $y \geq \partial d_1(u',\mathcal{B}_p )$.
%$\partial d_1(u',\mathcal{B}_p ) \leq \partial d_1(u, \mathcal{B}_p)$. 

Now we can lower bound ${\alpha}_\psi$ when $\psi$ is given by \ref{General-lp} with $G(u) = - l d_1(u,\mathcal{B}_p)$. Let $(v,u) = (\sum_{t=1}^{m} c_t \tilde{x}_t, \sum_{t=1}^{m} B_t \tilde{x}_t)$. Note that by \eqref{sub1} when $y \in \partial \psi(u)$ with $u \notin \mathcal{B}_p$, then $\min_{i} y_i = -l$. Now by the definition of $l$ in \eqref{l-bound} and the explanation that followed it, we must have $ u \in \mathcal{B}_p$. 
If $u \in {\rm int} \mathcal{B}_p$, then $G$ is differentiable at $u$ and $\nabla G(u) = 0$ which yields $\alpha_{\psi}(v,u) = 0$. Now suppose $u \in {\rm bd}{\mathcal{B}_p}$. In this case we have:
\begin{align*}
\alpha_{\psi}(v,u) = \min_{y \in \partial G(u)}\frac{G^*(y)}{\psi\left(\sum_{t=1}^{m} A_t x_t\right)} &=  \min_{y \in \partial G(u)} \frac{-\norm{y}_{q}}{\sum_{t=1}^{n} c_t \bar{x}_t} \geq  \min_{y \in \partial G(u)} \frac{-\norm{y}_{q}}{\theta \mathbf{1}^T u}.
\end{align*}
Recall that $\theta =\min_{t} \min_{x \in F_t} \frac{c_t^T x}{\mathbf{1}^T {B_{t} x}}$. By \eqref{sub2}, we have:
\begin{align*}
\min_{y \in \partial G(u)} {-\norm{y}_{q}} =  - \frac{l}{(\max_i u_i)^{p-1}}.
\end{align*}
Therefore, 
\begin{align*}
\alpha_{\psi}(v,u) \geq  - \frac{l}{\theta}\frac{1}{(\mathbf{1}^T u )(\max_i u_i)^{p-1}}. \end{align*}
As $u$ varies on the ${\rm bd}{\mathcal{B}_p}$, the right hand side is lower bounded by $\frac{-l}{\theta}$. This yields $\alpha_\psi \geq -\frac{l}{\theta }$.

\subsection{Derivation of lower bounds on ${\alpha}_{\psi,\psi\Box\phi}$}\label{alpha-derivation}

We first derive a general inequality which will be specialized to different examples for bounding $\alpha_{\psi, \psi\Box\phi}$. Let $K = K_1 \times K_2$, with $K_1$ and $K_2$ two proper cones. Suppose $\psi(v,u) = H(v) + G(u)$, where $H: K_1 \mapsto \mathbb{R}$ is a non-decreasing, and $G:K_2 \mapsto \mathbb{R}$ is non-increasing and $l$ Lipschitz continuous. We assume $H(v) \geq \theta  u$ for all $(v,u) \in \sum_{t=1}^{m} A_t F_t$.  Note that $\psi^*(z,y) = H^*(z) + G^*(y)$. We set
\begin{align*}\phi^*(y)=\sum_{i=1}^{m}\frac{1}{\gamma} \left(\left(y_i -\frac{\theta}{(e-1)}\right) \log\left(1-\frac{ (e-1)}{\theta} y_i\right)-(1+\gamma)y_i\right),\end{align*}
where $\gamma = \log(1+ \frac{l (e-1)}{\theta})$. We let $\psi\Box\phi(v,u) = H(v) + G\Box\phi(u)$. Let $(v,u) \in \sum_{t=1}^{m} A_t \tilde{x}_t$. Since $\psi\Box\phi(0) = 0$, and the simultaneous algorithm does not decrease the objective,
\begin{align}\label{HGbox}
H(v) + G\Box\phi(u) \geq   \psi\Box\phi(0) = 0.
\end{align}

Let $(z,y) \in \partial \psi\Box\phi (u)$, then we have:
\begin{align}\notag
%u = \frac{1}{\gamma} \log(1-\frac{e-1}{\theta} y) + \tilde{\nabla} G^*(y) 
u_i &=  \nabla_i \phi^*(y)+ \tilde{\nabla}_i G^*(y) \\ \label{first-order-optimality} 
&= \frac{1}{\gamma} \log(1-\frac{(e-1)y_i}{\theta})-1 + \tilde{\nabla}_i G^*(y),
\end{align}
for some $\tilde{\nabla} G^*(y) \in \partial G^*(y)$. Using the previous identity, we can derive the following upper bound for $G\Box\phi(u)$:
\begin{align} \notag
G\Box\phi (u) &= \inner{y}{u} -G^*(y) -\phi^*(y)\\ \notag
 &=  \inner{y}{\tilde{\nabla} G^*(y) } -G^*(y)  + \inner{y}{\nabla \phi^*(y)} -\phi^*(y) \\ \notag
%G\Box\phi (u) 
&= G(\tilde{\nabla} G^*(y) )+ \phi(\nabla \phi^*(y))\\ \notag
 &= G(\tilde{\nabla} G^*(y) )+ \frac{\theta}{(e-1)}\sum_{i=1}^{m}(u_i - \tilde{\nabla}_i G^*(y)+1) + \frac{1}{\gamma}\mathbf{1}^T y \\ \label{identity}
 &\leq G(\tilde{\nabla} G^*(y) )+ \frac{1}{(e-1)} H (v) + \frac{\theta}{(e-1)}\sum_{i=1}^{m}(1 - \tilde{\nabla}_i G^*(y))+ \frac{1}{\gamma}\mathbf{1}^T y .
\end{align}
%
%\begin{lemma}
%If for all $(v,u) \in Q_{\psi\Box\phi}$, $$
%\end{lemma}
Now we specialize the bound to the case where $G: \mathbf{R}_{+}^m \mapsto \mathbf{R}$ and $G(u) = - l \sum_{i=1}^{m}(u_i - 1)_{+}$. We lower bound $\alpha_{\psi,\psi\Box\phi}(v,u)$ when $u \leq \mathbf{1}$. In that case, \eqref{first-order-optimality} is satisfied with $\tilde{\nabla}_i G^*(y) = \mathbf{1}$. Thus from \eqref{identity} simplifies to:
\begin{align} \label{standalone-box}
%G\Box\phi (u) &= \inner{y}{u} -G^*(y) -\phi^*(y)\\ \notag
 %&=  \inner{y}{\tilde{\nabla} G^*(y) } -G^*(y)  + \inner{y}{\nabla \phi^*(y)} -\phi^*(y) \\ \notag
 G\Box\phi (u) &\leq  \frac{1}{(e-1)}H (v) + \frac{1}{\gamma}\mathbf{1}^T y.
\end{align}

Combining this with \eqref{HGbox}, we get:
\begin{align} \label{D-bound-LP}
H(v)+G(u) = H(v) \geq  \frac{(1/e - 1)}{\gamma}\mathbf{1}^T y .
\end{align}

In the view of definition of $\alpha_{\psi,\psi\Box\phi}$, by using \eqref{standalone-box} and the fact that $G^*(y)=\mathbf{1}^T y$, we derive the following inequality:
\begin{align}\label{P-bound-LP}
H(v) + G\Box\phi(u) - G^*(y) &\leq ( 1+\frac{1}{e-1}) H(v)  + (\frac{1}{\gamma}-1) \mathbf{1}^T y .
\end{align}

Combining \eqref{D-bound-LP} and \eqref{P-bound-LP}, we get the following lower bound on $\alpha_{\psi,\psi\Box\phi}$:
\begin{align}\label{alphaHG}
{\alpha}_{\psi,\psi\Box\phi}(v,u) \geq 1 - (1+\frac{1}{e-1})\gamma + \bar{\alpha}_{H}.
\end{align}

\textbf{Online LP:} In this problem,
\begin{align}
\psi\left(\sum_{t=1}^{m} A_t x_t\right) = \sum_{t=1}^{m} c_t^T x_t + G\left(\sum_{t=1}^{m} B_t x_t\right),
\end{align}
with $G(u) = -l \sum_{i=1}^{n}(u_i-1)_{+}$. In this problem $H(v) = v$ is the identity function.% Recall that $l$ is defined such that:
%\begin{align*}
%l \geq \max \left\{\frac{c_t^T x}{(B_t x)_i}\;|\; x \in F_t,  (B_t x)_i > 0, i \in [m]\right\}.
%\end{align*}

Let $(\tilde{v},\tilde{u}) = (\sum_{t=1}^{m} c_t^T \tilde{x}_t,\sum_{t=1}^{m} B_t \tilde{x}_t)$. If $\tilde{y}_{t,i} \leq -l$ for some $t$ and $i$, then by the definition of $l$ in \eqref{l-bound} and the explanation that followed it, we have $(B_t \tilde{x}_{t})_{i}=0$. On the other hand, we have $\nabla_{i} G \Box \phi(u) \leq - l$ when $\tilde{u}_i \geq 1$. Therefore, we conclude that $\tilde{u} \leq \mathbf{1}$.  Also, by the definition of $\theta$, we have $H(\tilde{v}) \geq \theta \mathbf{1}^T \tilde{u}$. Since $H$ is a linear function, $\alpha_H = 0$. Thus \eqref{alphaHG} yields
\begin{align}
{\alpha}_{\psi,\psi\Box\phi}(\tilde{v},\tilde{u}) \geq 1 - (1+\frac{1}{e-1})\gamma.
\end{align}

\textbf{Online graph formation and online Experiment design :} In this problem,  $H(U) = \log\det(U+A_0) - \log\det(A_0)$, $G(u) = - l (b - u)_{+}$ and $A_t x = (a_t^T a_t x, x)$. 

%Note that 
%%$G^*(y) = b y$ with $\rm{dom}{G^*}$
%\begin{align*}G^*(y) = \left\{ \begin{array}{ll} b^T y & 0 \leq y \leq 1 \\
%-\infty & y > 1\end{array}\right.\end{align*}
%
We can use the following identity for determinant of rank one update of a matrix 
$$\det(A+vv^T) = \det(A) (1+v^T A^{-1} v^T),$$
to derive a lower bound on $\log\det(U+A_0) - \log\det(A_0)$, 
 \begin{align*}\notag
 H(U) & =  \log\det(U+A_0) - \log\det(A_0)  \\ \notag
 &= \sum_{s=1}^{m} \log\det(A_0+ \sum_{t=1}^{s} a_t a_t^T \tilde{x}_t) -  \log\det(A_0+ \sum_{t=1}^{s-1} a_t a_t^T \tilde{x}_t)\\ \notag
 & = \sum_{t=1}^{m} \log(1+a_t^T \tilde{Y}_t a_t \tilde{x}_t)\\ \label{logbound-2}
 &\geq \log(1+\frac{1}{n}) \sum_{t=1}^{m} \tilde{x}_t = \theta u.
 \end{align*}

Let $(\tilde{U},\tilde{u}) = (\sum_{t=1}^{m} a_t a_t^T \tilde{x}_t, \sum_{t=1}^{m} \tilde{x}_t)$.
% and $(Y,y) \in \partial \psi\Box\phi(U,u)$. 
If $\tilde{y}_{t} \leq -l$ for some $t$, then by the definition of $l$ in \eqref{Kirchhoff-l}, $a_t^T \tilde{Y}_t a_t + \tilde{y}_t < 0 $ which results in $\tilde{x}_t = 0$. Given the fact that ${G \Box \phi}'(u)  = -l$ when $u \geq b$, we conclude that $\tilde{u} \leq b$. Therefore, by \eqref{alphaHG}, we get
\begin{align*}
\alpha_{\psi,\psi\Box\phi}(\tilde{U}, \tilde{u}) \geq  - (1+\frac{1}{e-1})\gamma.
\end{align*}

\end{full}

\end{document}